\renewcommand{\d}{\operatorname{d}\!}
\newcommand{\Cov}{\operatorname{Cov}}
\newcommand{\dLSI}{\delta_{\mathsf{LSI}} }
\newtheorem{theorem}{Theorem}
\newtheorem{remark}{Remark}
\newtheorem{corollary}{Corollary}
\newtheorem{proposition}{Proposition}
\newcommand{\EE}{\mathbb{E}}
\title{Links between the  Logarithmic Sobolev Inequality and the  convolution  inequalities for Entropy and Fisher Information}
\author{Thomas~A.~Courtade \\ Department of Electrical Engineering and Computer Sciences\\University of California, Berkeley}
\begin{document}

\maketitle

\begin{abstract}
Relative to the  Gaussian measure on $\mathbb{R}^d$, entropy and Fisher information are famously related via Gross' logarithmic Sobolev inequality (LSI).  These same functionals also separately satisfy  convolution inequalities, as proved by Stam.  We establish a dimension-free inequality that interpolates among these relations.  %
Several interesting corollaries follow:  (i) the deficit in the LSI satisfies a convolution inequality itself; (ii) the deficit in the LSI controls convergence in the entropic and Fisher information central limit theorems; and (iii) the LSI is stable with respect to  {HWI jumps} (i.e., a jump in any of the convolution inequalities associated to the HWI functionals).

Another  consequence is that the convolution inequalities for Fisher information and entropy powers are reversible in general, up to a factor depending on the \emph{Stam defect}.  An improved form of Nelson's hypercontractivity estimate also follows.  Finally, we speculate on the possibility of an analogous reverse Brunn-Minkowski inequality and a related upper bound on  surface area associated to Minkowski sums.

\end{abstract}
 
 \section{Introduction}
 For a random vector $X$ on $\mathbb{R}^d$ with absolutely continuous density $f$, the entropy of $X$ is given by
 \begin{align}
h(X) = -\int f \log f,
\end{align}
provided the integral exists, and the entropy power of $X$ is defined according to 
\begin{align}
N(X) = \frac{1}{2\pi e} e^{\tfrac{2}{d} h(X)}.
\end{align}
The Fisher information of $X$ is defined by
\begin{align}
J(X) = \int f \left| \nabla  \log f  \right|^2, 
 \end{align}
with $J(X)=\infty$ if the integral does not exist. 

The entropy and Fisher information functionals play a fundamental role in information theory and related fields, and enjoy many useful properties.  Standing out among these properties is their behavior under convolution of densities.  In particular, Stam \cite{stam1959some} and Blachman \cite{blachman1965convolution} proved that if $X,Y$ are independent random vectors on $\mathbb{R}^d$, then 
\begin{align}
N(X+Y) \geq N(X)+N(Y)\label{EPI}
\end{align}
and
\begin{align}
\frac{1}{J(X+Y)}\geq \frac{1}{J(X)}+\frac{1}{J(Y)}.\label{FII}
\end{align}

These inequalities may also be stated in terms of relative entropies and Fisher informations, which will be useful for our purposes.  Toward this end, the entropy of $X$ relative to the standard normal $N(0,\mathrm{I})$ is 
\begin{align}
D(X) := D(f \| \phi) =\int f \log \frac{f}{\phi},
\end{align}
where $\phi (x) =  {(2\pi)^{-d/2}}e^{-|x|^2/2}$ denotes the   Gaussian density on $\mathbb{R}^d$.
By Jensen's inequality, $D(X)\geq 0$, with equality iff $X\sim N(0,\mathrm{I})$.  Similarly, the Fisher information of $X$ relative to $N(0,\mathrm{I})$ is defined according to 
\begin{align}
I(X) := I(f\|\phi)  = \int f \left| \nabla  \log \frac{f}{\phi}  \right|^2.
 \end{align}
As with $D(X)$, the quantity $I(X)$ is nonnegative, and zero only if $X\sim N(0,\mathrm{I})$.   Completely equivalent to \eqref{EPI} and \eqref{FII}, respectively, are the inequalities 
\begin{align}
\theta D(X) + \bar\theta D(Y) &\geq D(\sqrt{\theta}X + \sqrt{\bar\theta} Y)\label{relEPI}\\
\theta I(X) + ~\bar\theta I(Y) &\geq I(\sqrt{\theta}X + \sqrt{\bar\theta} Y),\label{relFII}
\end{align}
where $X,Y$ are independent zero-mean random vectors, $\theta\in[0,1]$ and $\bar{\theta}:=1-\theta$. Given their equivalence, we shall  refer to inequalities \eqref{EPI} and \eqref{relEPI} collectively as the {entropy power inequality} (EPI), and inequalities \eqref{FII} and \eqref{relFII} collectively as the Fisher information inequality (FII). 

Evidently, the EPI and FII apply separately to the entropy and Fisher information functionals.  However, in 1975, Gross established a remarkable inequality directly relating relative Fisher information to relative entropy \cite{gross1975logarithmic}.  Namely, 
\begin{align}
\dLSI(X) := \frac{1}{2}I(X) - D(X)\geq 0, \label{GrossLSI}
\end{align}
which is known as the logarithmic Sobolev inequality (LSI) for standard Gaussian measure.  The quantity $\dLSI(X)$ is the deficit in the LSI associated to $X$, and is zero iff $X$ is a translate of the standard normal \cite{carlen1991superadditivity}. The LSI has a variety of important consequences including Talagrand's quadratic transportation cost inequality \cite{talagrand1996transportation}, the Gaussian concentration inequality for Lipschitz functions (e.g., \cite{ledoux2005concentration}), and the Gaussian Poincar\'e inequality.  An interesting feature of the LSI (and many of its corollaries) is the fact that it  has no effective dependence on dimension. The convolution inequalities satisfied by Fisher information and relative entropy  also enjoy this dimension-free property. 

Although it was not recognized until the 1990s by Carlen \cite{carlen1991superadditivity}, Gross' LSI is in fact mathematically equivalent to the uncertainty principle
\begin{align}
\mathsf{p}(X) := \frac{1}{d}N(X) J(X) \geq 1,\label{stamIneq}
\end{align}
which was observed by Stam in his 1959 proof of \eqref{EPI} and \eqref{FII}.  We  refer to   the quantity $\mathsf{p}(X)$ as the \emph{Stam defect} associated to $X$, and remark here that  $\mathsf{p}(X)=1$ iff $X \sim N(0,\sigma^2 \mathrm{I})$ for some $\sigma^2>0$.   In fact, \eqref{stamIneq} is a direct consequence of the EPI and the so-called de Bruijn identity, which suggests a quantitative relationship between the LSI   and the EPI. Unfortunately, the derivation of the LSI  from the EPI does not  propagate any deficit terms, so only  conditions for equality are carried through.  In this paper, we fill this gap by proving a general inequality that interpolates between the LSI, the EPI and the FII. 

\subsection*{Organization}

The remainder of this paper is organized as follows.  Section \ref{sec:mainResults} contains our two main results, which are ultimately shown to be equivalent.  In particular, Section \ref{sec:InterpLSI} gives a general interpolation inequality for the LSI and EPI, followed by a brief discussion.  Section \label{sec:REPI} gives new reverse EPI and FII, and contains proofs of all main results. Section \ref{sec:Apps} provides  applications of the main results, including consequences for information-theoretic central limit theorems, stability of the LSI with respect to HWI jumps, and a sharp form of Nelson's hypercontractive inequality. In Section \ref{sec:Conc}, we conclude by speculating on the possibility of geometric analogues of our main results. 

\section{Main Results}\label{sec:mainResults}

\subsection{An Interpolation Inequality for the LSI and EPI}\label{sec:InterpLSI}

\begin{theorem}\label{thm:main}
Let $X,Y$ be independent, centered random vectors on $\mathbb{R}^d$.  For any $\theta \in[0,1]$ 
\begin{align}
D(X) + D(Y) \leq \frac{\bar \theta}{2}I(X) + \frac{ \theta}{2}I(Y) + D(\sqrt{\theta}X + \sqrt{\bar\theta} Y). \label{mainInequality}
\end{align}
 \end{theorem}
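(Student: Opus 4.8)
The plan is to interpolate between the two sides of \eqref{mainInequality} by running the Ornstein--Uhlenbeck (OU) flow, whose invariant measure is $\phi$ and with which the EPI, FII and LSI are all compatible. Let $X_t,Y_t$ denote the laws of $X,Y$ evolved under independent OU semigroups, with densities $f_t$, and observe the crucial compatibility that $W_t:=\sqrt{\theta}X_t+\sqrt{\bar\theta}Y_t$ is \emph{itself} the OU evolution of $W_0=\sqrt{\theta}X+\sqrt{\bar\theta}Y$ (since $\theta+\bar\theta=1$ forces the two driving noises to recombine into a single standard one). Along this flow the de~Bruijn identity reads $\frac{\d}{\d t}D(X_t)=-I(X_t)$, and the associated second--order (Bakry--\'Emery) identity reads $\frac{\d}{\d t}I(X_t)=-2I(X_t)-2K(X_t)$, where $K(X_t):=\int f_t\,|\nabla^2\log\tfrac{f_t}{\phi}|^2\ge 0$ is the Hessian relative Fisher information. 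Integrating these from $0$ to $\infty$, where all quantities vanish, yields $D(X)=\int_0^\infty I(X_t)\,\d t$ and the representation $\dLSI(X)=\tfrac12 I(X)-D(X)=\int_0^\infty K(X_t)\,\d t$.

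First I would record that, using $\tfrac12 I(X)=D(X)+\dLSI(X)$ and the analogous identity for $Y$, inequality \eqref{mainInequality} is algebraically equivalent to the deficit statement
\[
\theta D(X)+\bar\theta D(Y)-D\big(\sqrt{\theta}X+\sqrt{\bar\theta}Y\big)\ \le\ \bar\theta\,\dLSI(X)+\theta\,\dLSI(Y),
\]
that is, the deficit in the EPI \eqref{relEPI} is dominated by a weighted sum of the LSI deficits of the summands; the left side is nonnegative by \eqref{relEPI}, the right side by \eqref{GrossLSI}. To prove it I would set $\Psi(t):=\tfrac{\bar\theta}{2}I(X_t)+\tfrac{\theta}{2}I(Y_t)+D(W_t)-D(X_t)-D(Y_t)$, so that $\Psi(0)\ge0$ is exactly \eqref{mainInequality} and $\Psi(\infty)=0$. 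The identities above give the clean derivative
\[
\Psi'(t)=\theta I(X_t)+\bar\theta I(Y_t)-I(W_t)-\bar\theta\, K(X_t)-\theta\, K(Y_t),
\]
so the whole theorem follows once one shows $\Psi'\le 0$, since then $\Psi(0)\ge\Psi(\infty)=0$. (Equivalently, this is integrating the inequality $\Psi'\le0$ against the representations of the previous paragraph.)

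The heart of the matter is therefore the \emph{pointwise second--order strengthening} of the FII asserting, for all independent centered $X,Y$,
\[
\theta I(X)+\bar\theta I(Y)-I\big(\sqrt{\theta}X+\sqrt{\bar\theta}Y\big)\ \le\ \bar\theta\,K(X)+\theta\,K(Y),
\]
which is precisely $\Psi'\le0$. I expect this to be the main obstacle: it is \emph{strictly stronger} than the theorem (integrating it along the flow recovers \eqref{mainInequality}, but not conversely), and, unlike the ordinary FII \eqref{relFII}, it must bound a first--order convolution deficit by Hessian data. My plan to establish it would follow Stam and Blachman: represent the score $\nabla\log f_W$ of the convolution as a conditional expectation of the scores of the summands given $W$, differentiate this projection identity a second time to express $\nabla^2\log f_W$ in terms of the Hessians of $\log f_X,\log f_Y$ conditioned on $W$, and apply Jensen's inequality for the convex map $M\mapsto|M|^2$. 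The delicate point is the matching of the convolution weights, namely the appearance of $\bar\theta$ alongside $K(X)$ (the weight opposite to that of $X$ in $W$): this crossing must emerge from the Cauchy--Schwarz/Jensen step rather than being inserted by hand. Testing the Gaussian case indicates the bound is tight in the limit of a heavy summand carrying vanishing weight, so the projection estimate cannot afford any slack there, and confirming that the proposed argument is lossless at that extreme point is the step I would scrutinize most carefully.
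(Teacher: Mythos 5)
Your semigroup reduction is internally consistent --- the derivative computation for $\Psi$ is correct, and $\Psi'\le 0$ together with $\Psi(\infty)=0$ would indeed give the theorem --- but the argument has a genuine gap: everything hinges on the ``pointwise second--order FII''
\begin{align*}
\theta I(X)+\bar\theta I(Y)-I\big(\sqrt{\theta}X+\sqrt{\bar\theta}Y\big)\ \le\ \bar\theta\,K(X)+\theta\,K(Y),
\end{align*}
which you do not prove, and which (as you yourself note) is \emph{strictly stronger} than Theorem~\ref{thm:main}, since integrating it along the Ornstein--Uhlenbeck flow recovers \eqref{mainInequality} but not conversely. The sketched route to it does not close. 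The Stam--Blachman projection identity expresses the score of the sum as a conditional expectation of scores of the summands, and the resulting deficit in the FII is a \emph{conditional variance} of the scores given the sum, namely $\EE\,\big|\,\alpha\rho_X(X)+\beta\rho_Y(Y)-\EE[\alpha\rho_X(X)+\beta\rho_Y(Y)\mid W]\,\big|^2$. Bounding such a conditional variance by the Hessian functionals $K(X),K(Y)$ is a Poincar\'e-type inequality for the conditional law of $(X,Y)$ given $W$; it does not follow from Jensen or Cauchy--Schwarz, and no such inequality is available without strong regularity (e.g., log-concavity), which the theorem does not assume. Moreover, differentiating the projection identity in $w$ produces derivatives of conditional expectations, which bring in covariance terms rather than conditional expectations of Hessians, so the ``second Jensen step'' is not a well-defined operation as described. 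Even in the degenerate case where $X$ is standard Gaussian, your lemma amounts to a weighted-average monotonicity statement for $K$ along the flow, which already requires a fourth-order $\Gamma$-calculus argument rather than anything elementary. As it stands, you have replaced the theorem by a harder open claim.

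For contrast, the paper takes an entirely different route: Theorem~\ref{thm:main} is shown to be \emph{equivalent} (via the scaling identities \eqref{relEnt_s}--\eqref{relFI_s} and the variational formula $1+\log a=\inf_{s>0}(as-\log s)$) to the reverse EPI \eqref{REPI}, which in turn follows from the strengthened entropy power inequality \eqref{sEPI} of \cite{courtade2016strengthening} applied with $V=Z+Y$, together with de Bruijn's identity. In other words, the paper outsources the analytic depth to an already-established convolution inequality rather than to a new second-order dissipation estimate. If you want to salvage your approach, you would either need to prove your key lemma (a substantial result in its own right, if true), or weaken it to an integrated form --- but the integrated form is exactly \eqref{mainInequality} again, so no shortcut is available there.
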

Let us briefly discuss a few observations.  If we adopt the convention that $\dLSI(X) = \infty$ when $I(X)=\infty$, and $\dLSI(X) =\tfrac{1}{2}I(X)-D(X)$ otherwise, then \eqref{mainInequality} may  be rewritten as
 \begin{align}
\theta D(X) + \bar \theta D(Y) \leq \bar \theta \dLSI(X) +  \theta \dLSI(Y)+ D(\sqrt{\theta}X + \sqrt{\bar\theta} Y),
 \end{align} 
 from which it is plain that \eqref{mainInequality} interpolates between the LSI for $X$, the LSI for $Y$ and the EPI.  However, Theorem \ref{thm:main} also directly connects the LSI to the FII.  Indeed,  definitions and  algebra yield the following equivalent form:
 \begin{align}
 \dLSI(\sqrt{\theta}X + \sqrt{\bar\theta} Y)+ \frac{ \theta}{2}I(X) + \frac{\bar  \theta}{2}I(Y)   \leq \dLSI(X) + \dLSI(Y) + \frac{1}{2} I(\sqrt{\theta}X + \sqrt{\bar\theta} Y),\label{mainFII}
 \end{align}
 which interpolates between the FII and the LSIs for $X$, $Y$ and the sum $\sqrt{\theta}X + \sqrt{\bar\theta} Y$.  Although it is a weakening of \eqref{mainFII}, we may apply the FII to  find that $\dLSI$ satisfies its own convolution inequality which remains an improvement of Gross' LSI:  
 \begin{corollary}[Convolution inequality for the LSI]\label{cor:convLSI}
 Let $X,Y$ be independent random vectors on $\mathbb{R}^d$.  For all $\theta\in[0,1]$
  \begin{align}
 \dLSI(\sqrt{\theta}X + \sqrt{\bar\theta} Y) \leq \dLSI(X) + \dLSI(Y).\label{convLSIineq}
 \end{align}
 \end{corollary}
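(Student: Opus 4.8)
The plan is to derive \eqref{convLSIineq} directly from the reformulation \eqref{mainFII} of Theorem \ref{thm:main}, which is already phrased entirely in terms of $\dLSI$ and relative Fisher information. The whole content of the corollary lies in eliminating the three Fisher-information terms in \eqref{mainFII}, and the FII \eqref{relFII} is precisely the tool that does this. I would first dispose of the degenerate cases: if either $I(X)=\infty$ or $I(Y)=\infty$, then by the stated convention $\dLSI(X)=\infty$ or $\dLSI(Y)=\infty$, so the right-hand side of \eqref{convLSIineq} is infinite and there is nothing to prove. Hence I may assume $I(X),I(Y)<\infty$, so that every quantity below is finite.

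Next, for centered $X,Y$ I would simply rearrange \eqref{mainFII} to isolate the residual Fisher-information expression:
\begin{align}
\dLSI(\sqrt{\theta}X + \sqrt{\bar\theta} Y) \leq \dLSI(X) + \dLSI(Y) + \tfrac12\Big( I(\sqrt{\theta}X + \sqrt{\bar\theta} Y) - \theta I(X) - \bar\theta I(Y)\Big).
\end{align}
The bracketed term equals $I(\sqrt{\theta}X + \sqrt{\bar\theta} Y) - \big(\theta I(X) + \bar\theta I(Y)\big)$, which is $\leq 0$ by the FII \eqref{relFII}; thus the residual is nonpositive and \eqref{convLSIineq} follows at once. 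The coefficients match perfectly: the weights $\tfrac{\theta}{2}$ and $\tfrac{\bar\theta}{2}$ produced by \eqref{mainFII} are exactly those appearing in \eqref{relFII}.

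Finally I must reconcile the hypotheses: the corollary imposes no centering assumption, whereas \eqref{mainFII} and \eqref{relFII} require $X,Y$ centered. I would close this gap by verifying that $\dLSI$ is translation invariant. Using $\nabla \log \phi(x) = -x$ and integrating by parts gives $I(X) = J(X) - 2d + \EE|X|^2$ and $D(X) = -h(X) + \tfrac{d}{2}\log(2\pi) + \tfrac12\EE|X|^2$, so that
\begin{align}
\dLSI(X) = \tfrac12 J(X) + h(X) - d - \tfrac{d}{2}\log(2\pi),
\end{align}
in which the $\EE|X|^2$ terms cancel and only the translation-invariant functionals $J$ and $h$ survive. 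Recentering $X\mapsto X-\EE X$ and $Y\mapsto Y-\EE Y$ therefore leaves all three $\dLSI$ values in \eqref{convLSIineq} unchanged, while merely translating the sum by $-\sqrt{\theta}\,\EE X - \sqrt{\bar\theta}\,\EE Y$; this reduces the general statement to the centered case already handled.

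The main obstacle I anticipate is not the combination step, which is pure bookkeeping once \eqref{mainFII} is granted, but rather making the centering reduction airtight — in particular confirming the translation invariance of $\dLSI$ and that recentering the summands only translates $\sqrt{\theta}X + \sqrt{\bar\theta} Y$. Everything else amounts to a one-line application of the FII in the correct direction.
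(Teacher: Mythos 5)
Your proposal is correct and follows essentially the same route as the paper: the corollary is obtained by weakening the reformulation \eqref{mainFII} of Theorem \ref{thm:main} via the FII \eqref{relFII}, with the centering hypothesis removed by the translation invariance of $\dLSI(\cdot)$ (which the paper notes in its remark and you verify explicitly). The extra details you supply --- the infinite-Fisher-information case and the computation $\dLSI(X) = \tfrac12 J(X) + h(X) - d - \tfrac{d}{2}\log(2\pi)$ --- are accurate and merely make explicit what the paper leaves to the reader.
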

\begin{remark}The assumption of centered random vectors is not needed  for \eqref{convLSIineq} due to the fact that   $\dLSI(\cdot)$ is translation invariant.  In fact, the centering assumption is not needed in Theorem \ref{thm:main} either, and can be dealt with by incorporating the inner product $\langle \EE X, \EE Y\rangle$ into \eqref{mainInequality}.  Details are straightforward and are left to the reader. 
\end{remark}
 
 Theorem \ref{thm:main} is essentially best possible for any choice of $X,Y$.  Indeed, observe that definitions and  the EPI  imply
 \begin{align}
 \dLSI(X) + \dLSI(Y) %
&\leq 2 \left( \frac{1}{4} I(X) +\frac{1}{4}I(Y)-\frac{1}{2}I\left(\tfrac{1}{\sqrt{2}}(X+Y)\right)  +  \dLSI\left(\tfrac{1}{\sqrt{2}}(X+Y)\right) \right).\label{sumDLSI_LB}
 \end{align}
 However, Theorem \ref{thm:main} implies via \eqref{mainFII} that 
 \begin{align}
  \dLSI(X) + \dLSI(Y) &\geq \sup_{\theta\in[0,1]} \left(  \frac{\theta}{2} I(X) +\frac{\bar \theta}{2}I(Y)-\frac{1}{2}I\left(\sqrt{\theta}X+\sqrt{\bar\theta}Y\right)  +  \dLSI\left(\sqrt{\theta}X+\sqrt{\bar\theta}Y\right)   \right) \\
  &\geq  \frac{1}{4} I(X) +\frac{1}{4}I(Y)-\frac{1}{2}I\left(\tfrac{1}{\sqrt{2}}(X+Y)\right)  +  \dLSI\left(\tfrac{1}{\sqrt{2}}(X+Y)\right), \label{sumDLSI_UB}
 \end{align}
 which differs from corresponding the upper bound \eqref{sumDLSI_LB} by precisely a factor of 2.  
 
Loosely speaking, the conjunction of \eqref{sumDLSI_LB} and \eqref{sumDLSI_UB}  suggests that $\dLSI(X)+\dLSI(Y)$  can be roughly decomposed into two nonnegative parts that depend jointly on $X,Y$: (i) the dissipation of Fisher information $\tfrac{1}{2}\left( I(X) + I(Y) \right) - I (\tfrac{1}{\sqrt 2}(X + Y ) )$; and (ii) the deficit in the LSI associated to the rescaled sum $\tfrac{1}{\sqrt{2}}(X+Y)$.   
On this note, we remark    that neither of these quantities  control one another in general.

To see this, let $\rho \in (0,1)$ and consider  Gaussian random vectors $X,Y$ with distributions:
\begin{align}
X\sim N\left(0 ,\begin{bmatrix} 1 & \rho \\ \rho & 1\end{bmatrix}\right) ~~~~~~ Y\sim N\left(0 ,\begin{bmatrix} 1 & -\rho \\ -\rho & 1\end{bmatrix}\right).
\end{align}
  In this case, $\tfrac{1}{\sqrt 2}(X+Y) \sim N(0,I)$ so $\delta_{\mathsf{LSI}}\left(\tfrac{1}{\sqrt 2}(X+ Y)\right)=0$.  However, 
\begin{align}
 \tfrac{1}{2}\left( I(X) + I(Y) \right) - I\left(\tfrac{1}{\sqrt 2}(X + Y )\right) = \frac{2}{1-\rho^2}  - 2 = \frac{2 \rho^2}{1-\rho^2}.
\end{align}
On the other hand, consider $X_*$ to be an independent copy of $X$.  In this case, $\tfrac{1}{\sqrt 2}(X + X_* )$ is equal to $X$ in distribution, so 
\begin{align}
 \tfrac{1}{2}\left( I(X) + I(X_*) \right) - I\left(\tfrac{1}{\sqrt 2}(X + X_* )\right) = 0.
\end{align}
However, we may readily compute that
\begin{align}
\delta_{\mathsf{LSI}}\left(\tfrac{1}{\sqrt 2}( X+ X_*)\right)=\frac{\rho^2}{1-\rho^2} + \frac{1}{2}\log (1-\rho^2) >\frac{\rho^2}{2}.
\end{align}

Another simple consequence of the above discussion is that, if $X,X_*$ are independent and identically distributed, then  
 \begin{align}
  \dLSI(X) \asymp   \frac{1}{2}\left( I(X) -I\left(\tfrac{1}{\sqrt{2}}(X+X_*)\right) \right) +  \dLSI\left(\tfrac{1}{\sqrt{2}}(X+X_*)\right), \label{decomposeDeficit}
 \end{align}
 where `$\asymp$' denotes equality up to an absolute constant factor.   This suggests that the Fisher information jump $I(X) -I(\tfrac{1}{\sqrt{2}}(X+X_*))$ can be used to quantify the stability of the LSI, a topic we will return to in Section \ref{sec:HWIjumps}.  
 
 Finally, we note that Theorem \ref{thm:main} allows us to easily deduce the (well-known) equality conditions for the LSI from those for the EPI.  Indeed, since $\dLSI(\cdot)$ and $D(\cdot)$ are invariant to unitary transformations, it follows from Theorem \ref{thm:main} that  $\dLSI(X)=\dLSI(Y)=0$ only if 
 \begin{align}
 \theta D(X) + \bar \theta D(Y) =  D(\sqrt{\theta}X + \sqrt{\bar\theta} \mathbf{U}Y), \label{EPIeq}
 \end{align}
 for all $\theta\in[0,1]$ and unitary matrices $\mathbf{U}: \mathbb{R}^d\to\mathbb{R}^d$. From conditions for equality in the EPI, this implies $\operatorname{Cov}(X)$ and $\operatorname{Cov}(Y)$ are  proportional to $\mathrm{I}$; in fact, direct computation shows they must be equal to satisfy \eqref{EPIeq}.  Thus, evaluation of  $\dLSI(X)$ for $X\sim N(\mu,\sigma^2\mathrm{I})$ allows us to conclude that $\dLSI(X)=0$ if and only if $X\sim N(\mu,\mathrm{I})$. 
 
 \subsection{Reverse Entropy Power and Fisher Information Inequalities} \label{sec:REPI}
 Due to its   fundamental  role in information theory, there has been sustained interest in obtaining reverse forms of the entropy power inequality \eqref{EPI}.  As shown by Bobkov and Chistyakov \cite{bobkov2015entropy}, the EPI cannot be reversed in general, at least not up to a constant factor.  Nevertheless, progress has been made.  A  notable example of a reverse EPI is due to Bobkov and Madiman \cite{bobkov2012reverse}, who show that for independent random vectors $X,Y$ with log-concave densities,  there exist linear volume preserving maps $u,v$ such that 
 \begin{align}
 N(u(X) + v(Y)) \leq C (N(X) + N(Y)),
 \end{align}
 where $C$ is an absolute constant.   Bobkov and Madiman's result mirrors Milman's reverse Brunn-Minkowski inequality \cite{milman1986inegalite}, which is pleasant since the EPI itself mirrors the Brunn-Minkowski inequality.  A similar statement holds for a more general class of convex measures. See also the recent survey  by Madiman, Melbourne and Xu \cite{madiman2016forward} for related results.  
 
 Another  example of a reverse EPI is due to  Ball, Nayar and Tkocz \cite{ball2015reverse}, who  restrict attention to the class of log-concave densities. They show that, for a symmetric log-concave vector $(X,Y)$ in $\mathbb{R}^2$, there is an absolute constant $\kappa$ such that 
 \begin{align}
 N(X+Y)^{\kappa} \leq N(X)^{\kappa} + N(Y)^{\kappa}. 
 \end{align}
 
We show below that both the entropy power inequality \eqref{EPI} and the Fisher information inequality \eqref{FII} can be precisely reversed, up to  factors that depend only on the Stam defects associated to $X$ and $Y$.  In particular, if both $X$ and $Y$ each nearly saturate Stam's inequality \eqref{stamIneq}, then the EPI and FII will also be nearly saturated.  Notably, strong regularity assumptions are not imposed.

\begin{theorem} \label{thm:REPI} Let $X,Y$ be independent random vectors on $\mathbb{R}^d$ with finite second moment, and choose $\lambda$   to satisfy $\lambda/(1-\lambda) = N(Y)/N(X)$.  Then
\begin{align}
N(X+Y) \leq \left( N(X)+N(Y) \right) \left( \lambda \mathsf{p}(X) + (1-\lambda) \mathsf{p}(Y)\right).\label{REPI}
\end{align}
Furthermore, if $J(X),J(Y)<\infty$, then 
\begin{align}
\frac{1}{J(X+Y)}\leq \left(\frac{1}{J(X)}+\frac{1}{J(Y)} \right)  \mathsf{p} (X)   \mathsf{p} (Y) .\label{RFII}
\end{align}
\end{theorem}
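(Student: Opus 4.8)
The plan is to deduce Theorem \ref{thm:REPI} from Theorem \ref{thm:main}, exploiting that $N$, $J$ and $\mathsf{p}$ are scale-covariant (and translation invariant) while $D$ and $I$ are tied to the fixed reference $N(0,\mathrm{I})$. First I would reduce and reformulate. Since $h$, and hence $N$, $J$ and $\mathsf{p}$, are translation invariant, I may replace $X,Y$ by $X-\EE X$, $Y-\EE Y$ without altering \eqref{REPI} or \eqref{RFII}, so I assume $X,Y$ are centered; and if $J(X)=\infty$ or $J(Y)=\infty$ then both right-hand sides are infinite, so I assume $J(X),J(Y)<\infty$. Expanding the Stam defects and using $\lambda=N(Y)/(N(X)+N(Y))$ gives $(N(X)+N(Y))(\lambda\mathsf{p}(X)+(1-\lambda)\mathsf{p}(Y))=\tfrac1d N(X)N(Y)(J(X)+J(Y))$, so \eqref{REPI} is equivalent to the scale-free inequality
\begin{equation}
N(X+Y)\leq \tfrac1d\, N(X)N(Y)\big(J(X)+J(Y)\big). \tag{$\star$}
\end{equation}

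The core step is to obtain $(\star)$ from Theorem \ref{thm:main}. I would apply \eqref{mainInequality} to the dilated vectors $aX$ and $bY$ ($a,b>0$) and tie the scales by $\sqrt{\theta}\,a=\sqrt{\bar\theta}\,b=:s$, so that the normalized sum becomes $\sqrt{\theta}(aX)+\sqrt{\bar\theta}(bY)=s(X+Y)$. Substituting the elementary dilation identities $D(aX)=-h(X)-d\log a+\tfrac d2\log(2\pi)+\tfrac{a^2}2\EE|X|^2$ and $I(aX)=a^{-2}J(X)-2d+a^2\EE|X|^2$ (and their analogues for $bY$ and $s(X+Y)$), all second-moment contributions cancel identically — this is where centering enters, through $\EE|X+Y|^2=\EE|X|^2+\EE|Y|^2$. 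With $b=\sqrt{\theta/\bar\theta}\,a$ the two Fisher terms share the common coefficient $\bar\theta/(2a^2)$, and \eqref{mainInequality} collapses to
\[
h(X+Y)\leq h(X)+h(Y)+\frac{\bar\theta}{2a^2}\big(J(X)+J(Y)\big)+d\log a-\tfrac d2\log\bar\theta-d-\tfrac d2\log(2\pi).
\]
Minimizing the right side over $a>0$, at $a^2=\bar\theta(J(X)+J(Y))/d$, cancels the entire $\bar\theta$-dependence and leaves $h(X+Y)\leq h(X)+h(Y)+\tfrac d2\log\tfrac{J(X)+J(Y)}{d}-\tfrac d2-\tfrac d2\log(2\pi)$. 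Rewriting each $h(\cdot)=\tfrac d2\log(2\pi e\,N(\cdot))$ and exponentiating then yields exactly $(\star)$, which is \eqref{REPI}.

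To finish with \eqref{RFII}, I would write $1/J(X+Y)=N(X+Y)/(d\,\mathsf{p}(X+Y))$ and invoke Stam's inequality $\mathsf{p}(X+Y)\geq 1$ to get $1/J(X+Y)\leq N(X+Y)/d$. Bounding $N(X+Y)$ by $(\star)$ gives $1/J(X+Y)\leq \tfrac1{d^2}N(X)N(Y)(J(X)+J(Y))$, and re-expanding $\mathsf{p}(X)\mathsf{p}(Y)=\tfrac1{d^2}N(X)J(X)N(Y)J(Y)$ shows the right side equals $\big(\tfrac1{J(X)}+\tfrac1{J(Y)}\big)\mathsf{p}(X)\mathsf{p}(Y)$, as claimed.

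The one genuinely creative move — and the main obstacle — is the coupled dilation together with the scale optimization: this is precisely what converts the fixed-scale, relative-entropy statement of Theorem \ref{thm:main} into the homogeneous $(\star)$, and it is also what makes the bound tight (equality for isotropic Gaussians, where $\mathsf{p}=1$). The remaining points to verify carefully are the clean cancellation of the second-moment and $\bar\theta$ terms, and the trivial treatment of the degenerate and infinite-information cases; both are routine once the substitution is in place.
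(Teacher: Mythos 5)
Your derivation is computationally sound, and in fact it reproduces, almost line for line, the ``\eqref{mainInequality} $\Rightarrow$ \eqref{REPI}'' half of the paper's Proposition asserting that Theorems \ref{thm:main} and \ref{thm:REPI} are equivalent: your coupled dilation $aX, bY$ with $\sqrt{\theta}a=\sqrt{\bar\theta}b$ followed by optimization over $a$ is the same maneuver as the paper's optimization over the variance $s$ of the reference Gaussian $G_s$ via $1+\log a=\inf_{s>0}(as-\log s)$, and your treatment of \eqref{RFII} via $\mathsf{p}(X+Y)\geq 1$ matches the paper's. The problem is logical, not computational: in this paper Theorem \ref{thm:main} is \emph{not} an independently available result. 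The paper proves Theorem \ref{thm:REPI} first, from an external ingredient, and only then states ``Theorem \ref{thm:main} now follows from Theorem \ref{thm:REPI}.'' So taking Theorem \ref{thm:main} as your starting point makes the argument circular within the paper's logical structure; you have proved an equivalence, not the theorem.

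The missing idea is the actual engine of the paper's proof: the strengthened EPI \eqref{sEPI} from \cite{courtade2016strengthening}, namely $e^{-\tfrac{2}{d}I(X;V)}N(Z)\geq e^{-\tfrac{2}{d}I(Z;V)}N(X)+N(W)$ for $Z=X+W$ with $W$ Gaussian and $X\independent V\mid Z$. Choosing $V=X+Y+W$ and expanding the mutual informations yields the three-variable inequality \eqref{3EPI}, $N(X+W)N(Y+W)\geq N(X)N(Y)+N(X+Y+W)N(W)$; setting $W=\sqrt{t}G$, dividing by $t$, letting $t\to 0$, and invoking de Bruijn's identity $\frac{\d}{\d t}N(X+\sqrt{t}G)\big|_{t=0}=\mathsf{p}(X)$ gives $N(X+Y)\leq N(X)\mathsf{p}(Y)+N(Y)\mathsf{p}(X)$, which is your $(\star)$. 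Your proposal supplies no substitute for this input. To repair it you would either need to cite \eqref{sEPI} and run this limiting argument, or give an independent proof of Theorem \ref{thm:main} (which the paper does not do directly). Everything downstream of $(\star)$ in your write-up --- the identification with \eqref{REPI} via the choice of $\lambda$, and the deduction of \eqref{RFII} --- is correct and agrees with the paper.
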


Since \eqref{REPI} does not immediately resemble either of the reverse inequalities mentioned above, let us briefly comment on how it may be understood in the context of known results.  In particular, it is well known that entropy power is concave under the action of the heat semigroup \cite{costa1985new, dembo1989simple, villani2000short}.  That is, if $G\sim N(0,\mathrm{I})$, then 
 \begin{align}
 \frac{\d^2}{\d t^2}N(X+ \sqrt{t} G)   \leq 0,\label{concaveEP}
 \end{align}
which is  the same as $N(X+ \sqrt{t} G)$ lying below its tangents lines.  By the semigroup property, it suffices to consider the tangent line at $t=0$, so \eqref{concaveEP} is equivalent to
\begin{align}
N(X+\sqrt{t}G) &\leq N(X) + t \left(\frac{\d}{\d t}N(X+\sqrt{t}G) \Big|_{t=0}\right)\\
&=  N(X) + t\, \mathsf{p}(X),
\end{align}
where the equality follows by de Bruijn's identity (e.g., \cite{stam1959some, bakry2013analysis, carlen1991entropy}). This coincides exactly with \eqref{REPI} particularized to the case where $Y = \sqrt{t} G$.   Thus, we may think of  \eqref{REPI} as a generalization of `concavity of entropy power' beyond the heat semigroup. 

Inequality \eqref{REPI} may also be viewed as a strengthening of Stam's uncertainty principle \eqref{stamIneq}.  Indeed, letting $X,X_*$ be IID, \eqref{REPI} reduces to
\begin{align}
\mathsf{p}(X) \geq \frac{N(\tfrac{1}{\sqrt 2} (X+X_*))}{N(X)}.
\end{align}
By the EPI, the term on the RHS is strictly greater than 1 unless $X$ is Gaussian.  On this note, we mention that Carlen and Soffer \cite{carlen1991entropy} have shown  that for $X$ centered with $\Cov(X)=\mathrm{I}$, there exists a nonnegative function $\Theta$ on $[0,\infty)$, strictly increasing from $0$ and depending only on certain decay and smoothness properties of $X$ such that
\begin{align}
 \frac{N(\tfrac{1}{\sqrt 2} (X+X_*))}{N(X)}\geq \exp\left(\frac{2}{d} \Theta (D(X)) \right).
\end{align}

Finally, we note that an equivalent version of Corollary \ref{cor:convLSI} for the Stam defect follows directly from \eqref{REPI} and the FII:
\begin{corollary}[Convolution inequality for the Stam defect] 
For $X,Y$ be independent random vectors on $\mathbb{R}^d$ with finite second moment,
\begin{align}
\mathsf{p}(X+Y) \leq \mathsf{p}(X)\mathsf{p}(Y). \label{subStamDefect}
\end{align}
\end{corollary}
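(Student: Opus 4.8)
The plan is to combine the reverse EPI \eqref{REPI} with the forward FII \eqref{FII}, treating the claim as pure bookkeeping between entropy powers and Fisher informations through the identity $\mathsf{p}(Z) = \tfrac{1}{d}N(Z)J(Z)$. First I would unwind the weighted combination on the right-hand side of \eqref{REPI}. With $\lambda$ chosen so that $\lambda/(1-\lambda) = N(Y)/N(X)$, one has $\lambda = N(Y)/(N(X)+N(Y))$ and $1-\lambda = N(X)/(N(X)+N(Y))$, so that
\begin{align}
\left(N(X)+N(Y)\right)\left(\lambda\, \mathsf{p}(X) + (1-\lambda)\,\mathsf{p}(Y)\right) = N(Y)\,\mathsf{p}(X) + N(X)\,\mathsf{p}(Y).
\end{align}
Hence \eqref{REPI} collapses to the symmetric bound $N(X+Y) \le N(Y)\,\mathsf{p}(X) + N(X)\,\mathsf{p}(Y)$, and it is precisely this symmetrization that makes the single prescribed value of $\lambda$ suffice, with no further optimization needed.

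Next I would re-express the entropy powers on the right in terms of Fisher informations. Assuming for the moment that $J(X),J(Y) \in (0,\infty)$, the definition of the Stam defect gives $N(X) = d\,\mathsf{p}(X)/J(X)$ and $N(Y) = d\,\mathsf{p}(Y)/J(Y)$. Substituting these into the previous display yields
\begin{align}
N(X+Y) \le d\,\mathsf{p}(X)\,\mathsf{p}(Y)\left(\frac{1}{J(X)} + \frac{1}{J(Y)}\right).
\end{align}
Multiplying through by $\tfrac{1}{d}J(X+Y)$ and recognizing the left-hand side as $\mathsf{p}(X+Y)$, it remains only to invoke the FII \eqref{FII} in the form $J(X+Y)\left(\tfrac{1}{J(X)}+\tfrac{1}{J(Y)}\right) \le 1$, which immediately delivers \eqref{subStamDefect}.

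There is no genuine obstacle here; the only care required concerns degenerate cases, which I would dispatch separately. If either $J(X)$ or $J(Y)$ is infinite while the entropy powers stay positive, the corresponding Stam defect is infinite, the right-hand side of \eqref{subStamDefect} is $+\infty$, and the claim is vacuous; and if $J(X),J(Y)$ are both finite then the FII forces $J(X+Y)<\infty$, so every substitution above is legitimate. I would note in passing that this derivation uses \eqref{REPI} only at the distinguished $\lambda$, and that the resulting multiplicative bound \eqref{subStamDefect} is the exact Stam-defect analogue of the additive convolution inequality for $\dLSI$ in Corollary \ref{cor:convLSI}, consistent with the equivalence between the LSI and Stam's uncertainty principle \eqref{stamIneq}.
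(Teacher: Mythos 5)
Your proposal is correct and follows exactly the route the paper indicates (the paper states the corollary ``follows directly from \eqref{REPI} and the FII'' without writing out the algebra): substitute $N(X)=d\,\mathsf{p}(X)/J(X)$, $N(Y)=d\,\mathsf{p}(Y)/J(Y)$ into the symmetrized form of \eqref{REPI} and then apply \eqref{FII}. Your handling of the degenerate cases where a Fisher information is infinite is a reasonable and correct supplement.
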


The proof of Theorem \ref{thm:REPI} follows rather directly from a strengthening of the entropy power inequality proved recently by the author.  To state it, we first recall some notation familiar to information theorists:  Let $U,V$ have joint distribution $P_{UV}$ on the space $\mathcal{U}\times \mathcal{V}$ and let the respective marginals be denoted by $P_U,P_V$.  The mutual information $I(U;V)$ between $U$ and $V$ is given by
\begin{align}
I(U;V) := \EE \log \left( \frac{\d P_{UV}}{\d P_U \!\times \!P_V} \right) = \int_{\mathcal{U}\times \mathcal{V}}   \log \left( \frac{\d P_{UV}}{\d P_U \!\times \!P_V} \right) \d P_{UV}.
\end{align}
\begin{theorem} \cite{courtade2016strengthening}
Let $X,W$ be independent random vectors on $\mathbb{R}^d$, with $W$ Gaussian.  Define $Z=X+W$.  For any random variable $V$ such that $X,V$ are conditionally independent given $Z$, it holds that
\begin{align}
e^{-\tfrac{2}{d} I(X;V)}N(Z) \geq e^{-\tfrac{2}{d} I(Z;V)} N(X) + N(W).\label{sEPI}
\end{align}
\end{theorem}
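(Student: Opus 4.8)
The plan is to prove \eqref{sEPI} by interpolating along the heat semigroup and reducing it to a single pointwise conditional Fisher-information inequality, which is then integrated through a de Bruijn-type identity. As a preliminary normalization, observe that applying an invertible linear map $T$ simultaneously to $X,W,Z$ multiplies $N(X),N(W),N(Z)$ by the common factor $|\det T|^{2/d}$, leaves the mutual informations $I(X;V),I(Z;V)$ unchanged, and preserves the Markov relation $X\to Z\to V$; taking $T=\Cov(W)^{-1/2}$ therefore lets me assume $W\sim N(0,\mathrm I)$, so that $N(W)=1$. I then couple $W=B_1$ to a standard Brownian motion $(B_t)_{t\in[0,1]}$ and set $Z_t=X+B_t$, so that $Z_0=X$ and $Z_1$ equals $Z$ in law. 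Keeping fixed the channel $P_{V\mid Z_1}$ that generates $V$ produces the Markov chain $X\to Z_s\to Z_t\to Z_1\to V$ for all $s<t$; in particular $I(Z_t;V)$ is nondecreasing in $t$, interpolating from $I(X;V)$ at $t=0$ to $I(Z;V)$ at $t=1$.

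With this setup I would study $G(t):=e^{-\frac2d I(X;V)}N(Z_t)-e^{-\frac2d I(Z_t;V)}N(X)-t$, whose value $G(1)\ge 0$ is precisely \eqref{sEPI}, and which satisfies $G(0)=0$. Its derivative combines two facts. The first is de Bruijn's identity, $\frac{\d}{\d t}N(Z_t)=\frac1d N(Z_t)J(Z_t)=\mathsf p(Z_t)$. The second, and more delicate, is the heat-flow identity for mutual information,
\begin{align}
\frac{\d}{\d t}I(Z_t;V)=\tfrac12\bigl(J(Z_t\mid V)-J(Z_t)\bigr),\qquad J(Z_t\mid V):=\int J(Z_t\mid V=v)\,\d P_V(v).\label{eq:miflow}
\end{align}
I would derive \eqref{eq:miflow} by noting that $p_{Z_t}$ solves the forward heat equation $\partial_t p=\tfrac12\Delta p$, whereas the conditional density factor $p_{V\mid Z_t}(v\mid\cdot)=p_{V\mid Z_1}(v\mid\cdot)\ast\phi_{1-t}$ solves the backward equation $\partial_t p=-\tfrac12\Delta p$; integrating by parts twice collapses the derivative of $I(Z_t;V)=h(Z_t)-h(Z_t\mid V)$ to $\tfrac12\,\EE\,\bigl|\nabla\log p_{V\mid Z_t}(V\mid Z_t)\bigr|^2$, and a short computation evaluates this score variance as $J(Z_t\mid V)-J(Z_t)$. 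It is exactly this forward--backward structure that makes the naive conditional de Bruijn identity inapplicable: because $V$ depends on the terminal increment $B_1$, conditioning on $V$ does not leave $Z_t$ a heat flow.

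Assembling these, the inequality $G'(t)\ge 0$ is equivalent to the pointwise statement
\begin{align}
e^{-\frac2d I(X;V)}\,\mathsf p(Z_t)+\tfrac1d\,e^{-\frac2d I(Z_t;V)}N(X)\bigl(J(Z_t\mid V)-J(Z_t)\bigr)\ \ge\ 1.\label{eq:star}
\end{align}
Two checks support \eqref{eq:star}. At $t=0$ it collapses, after cancellation, to the conditional Stam inequality $\tfrac1d N(X\mid V)J(X\mid V)\ge 1$, which follows by applying Stam's inequality \eqref{stamIneq} to each conditional law $P_{X\mid V=v}$ and then averaging: the average of $J(X\mid V=v)$ is at least $d$ times the average of $1/N(X\mid V=v)$, which by AM--GM is at least $d/N(X\mid V)$, since $N(X\mid V)$ is exactly the geometric mean of the $N(X\mid V=v)$. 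And when $X$ is Gaussian and $V$ is jointly Gaussian with $Z$, a direct computation gives $G'(t)\equiv 0$, so \eqref{eq:star} is tight precisely on Gaussians and the interpolation recovers the equality case of the EPI.

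The main obstacle is \eqref{eq:star} itself for general $t$: it is a conditional (indeed, strengthened) Fisher-information inequality that must carry the two mutual-information deficits as multiplicative corrections, with equality forced at Gaussians. The structural difficulty is that $B_t$ and $V$ are dependent---again because $V$ observes $B_1\supseteq B_t$---so $X$ and $B_t$ are \emph{not} conditionally independent given $V$, and the conditional analogue of the FII \eqref{FII} does not apply off the shelf. My plan to overcome this is a Blachman-type score argument executed within the conditioning: write both the marginal score $\nabla\log p_{Z_t}$ and the conditional score $\nabla\log p_{Z_t\mid V}$ as conditional expectations of the scores of $X$ and of the Gaussian $B_t$, bound the resulting quadratic forms by the projection (Cauchy--Schwarz) inequality underlying Stam's proof inside each level set of $V$, and reassemble across $V$ using the convexity of Fisher information. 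Making this projection respect the $V$-dependence with the sharp constant---so that equality holds exactly at Gaussians---is the technical heart of the proof; everything else is bookkeeping via \eqref{eq:miflow} and de Bruijn's identity.
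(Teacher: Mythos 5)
The first thing to say is that the paper does not actually prove this theorem: it is imported wholesale from \cite{courtade2016strengthening}, where it is established by a completely different route (a doubling/rotation argument in the style of Geng--Nair that identifies jointly Gaussian $(X,V)$ as extremal for the associated variational problem, rather than any heat-flow monotonicity). So your proposal is not a reconstruction of the paper's argument; it is an independent attack, and it must be judged on whether it closes. It does not, and you say so yourself: the entire content of the theorem has been pushed into the pointwise inequality
\begin{align*}
e^{-\frac2d I(X;V)}\,\mathsf p(Z_t)+\tfrac1d\,e^{-\frac2d I(Z_t;V)}N(X)\bigl(J(Z_t\mid V)-J(Z_t)\bigr)\ \ge\ 1,\qquad t\in(0,1),
\end{align*}
and for that inequality you offer only a plan (``a Blachman-type score argument executed within the conditioning''), not an argument. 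The surrounding scaffolding is sound: the normalization to $N(W)=1$, the Markov chain $X\to Z_t\to Z_1\to V$, de Bruijn, the forward--backward identity $\frac{\d}{\d t}I(Z_t;V)=\tfrac12\bigl(J(Z_t\mid V)-J(Z_t)\bigr)$ (which checks out, e.g., for $V=Z_1$), and the $t=0$ reduction to the conditional Stam inequality $\tfrac1d N(X\mid V)J(X\mid V)\ge1$ are all correct. But the technical heart is missing.

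There is also a structural warning you should take seriously. Writing $G(t)=e^{-\frac2d I(X;V)}N(Z_t)-e^{-\frac2d I(Z_t;V)}N(X)-t$, the theorem itself, applied with $W=B_t$ for each $t$, asserts only $G(t)\ge0$; your strategy requires the strictly stronger pointwise statement $G'(t)\ge0$ for every $t$, and nothing guarantees that $G$ is monotone. Indeed, the infinitesimal version of the theorem restarted at time $s$ (take $X'=Z_s$, $W'=B_1-B_s$) yields only the conditional Stam inequality $\tfrac1d N(Z_s\mid V)J(Z_s\mid V)\ge1$, in which the factors $e^{-\frac2d I(Z_s;V)}$ and $N(Z_s)$ appear matched, whereas your condition carries the mismatched pair $e^{-\frac2d I(X;V)}$ and $N(X)$: since $I(X;V)\le I(Z_s;V)$ pushes one term up while $N(X)\le N(Z_s)$ pushes the other down, neither statement dominates the other, so you may be attempting to prove something strictly stronger than --- and conceivably not implied by --- the target. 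Your computation showing equality for jointly Gaussian data at every $t$ is an encouraging consistency check, but until the displayed inequality is actually proved for $t>0$ --- precisely where the dependence between $B_t$ and $V$ must be confronted --- the proposal remains a program rather than a proof.
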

\begin{proof}[Proof of Theorem \ref{thm:REPI}]
Let $Y$ be any random vector on $\mathbb{R}^d$, independent of $X,W$, and suppose $X,Y$ have finite second moments.  Then, $V = Z+Y$ is such that $X,V$ are conditionally independent given $Z$.  By the definition of mutual information, 
\begin{align}
I(X;V) &= h(X+Y+W) - h(Y+W)\\
I(Z;V) &= h(X+Y+W) - h(Y).
\end{align} 
Thus, rearranging exponents in \eqref{sEPI} gives the following
\begin{align}
N(X+W)N(Y+W) \geq N(X)N(Y) + N(X+Y+W) N(W).\label{3EPI}
\end{align}
Now, let $W = \sqrt{t}G$, where $G\sim N(0,\mathrm{I})$, in which case \eqref{3EPI} particularizes to 
\begin{align}
\frac{N(X+\sqrt{t}G)N(Y+\sqrt{t}G) - N(X)N(Y)}{t}\geq N(X+Y+\sqrt{t}G)  \geq N(X+Y).
\end{align}
An application of de Bruijn's identity $\frac{\d}{\d t}N(X+\sqrt{t}G) \Big|_{t=0}= \mathsf{p}(X)$ and the chain rule for derivatives proves 
\begin{align}
N(X+Y) &\leq N(X) \mathsf{p}(Y) +  N(Y) \mathsf{p}(X),
\end{align}
which is the same as \eqref{REPI}.  Stam's inequality $\mathsf{p}(X+Y)\geq1$ and algebra (valid when $J(X),J(Y)<\infty$) shows that \eqref{RFII} is a corollary of \eqref{REPI}.
\end{proof}

Theorem \ref{thm:main} now follows from Theorem \ref{thm:REPI}.  In fact: 
\begin{proposition}
Theorems \ref{thm:main} and \ref{thm:REPI} are equivalent. 
\end{proposition}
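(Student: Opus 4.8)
The plan is to pass everything through the dictionary between the \emph{relative} functionals $D,I$ and the \emph{absolute} ones $h,J,N,\mathsf{p}$, and then to observe that in those terms \eqref{mainInequality} and \eqref{REPI} sit on the two sides of the elementary estimate $\log t\le t-1$, with the gap bridged by the scale-invariance of $\mathsf{p}$. Since $N$, $\mathsf{p}$, $\dLSI$ and all quantities in \eqref{REPI}--\eqref{RFII} are translation invariant, I may assume throughout that $X,Y$ are centered.

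First I would record the translation. Writing $\phi$ for the standard Gaussian density and integrating by parts, one has for centered $X$
\[
D(X) = -h(X) + \tfrac{d}{2}\log(2\pi) + \tfrac12\EE|X|^2,\qquad I(X) = J(X) - 2d + \EE|X|^2,
\]
together with $h(X) = \tfrac{d}{2}\log\!\big(2\pi e\,N(X)\big)$ and $J(X) = d\,\mathsf{p}(X)/N(X)$. Substituting these four identities into \eqref{mainInequality} and using $\EE|\sqrt{\theta}X+\sqrt{\bar\theta}Y|^2 = \theta\EE|X|^2+\bar\theta\EE|Y|^2$ (valid by centering), the second-moment terms cancel identically and \eqref{mainInequality} is seen to be \emph{exactly} equivalent, for each $\theta$, to
\begin{equation}\label{prop:interp}
1 + \log\frac{N(\sqrt{\theta}X+\sqrt{\bar\theta}Y)}{N(X)N(Y)} \;\le\; \bar\theta\,\frac{\mathsf{p}(X)}{N(X)} + \theta\,\frac{\mathsf{p}(Y)}{N(Y)}, \qquad \theta\in[0,1].
\end{equation}
This rewriting is a reversible computation, so Theorem \ref{thm:main} is equivalent to the assertion that \eqref{prop:interp} holds for every centered independent pair and every $\theta$.

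The easy direction, Theorem \ref{thm:REPI} $\Rightarrow$ Theorem \ref{thm:main}, I would obtain by applying \eqref{REPI} to the pair $(\sqrt{\theta}X,\sqrt{\bar\theta}Y)$: using $N(\sqrt{\theta}X)=\theta N(X)$ and $\mathsf{p}(\sqrt{\theta}X)=\mathsf{p}(X)$ gives $N(\sqrt{\theta}X+\sqrt{\bar\theta}Y)\le \theta N(X)\mathsf{p}(Y)+\bar\theta N(Y)\mathsf{p}(X)$; dividing by $N(X)N(Y)$, taking logarithms, and invoking $\log t\le t-1$ produces precisely \eqref{prop:interp}.

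The substantive direction, Theorem \ref{thm:main} $\Rightarrow$ Theorem \ref{thm:REPI}, is the step I expect to be the main obstacle. Here I would exploit that \eqref{prop:interp} holds for \emph{all} centered pairs, applying it to $(\alpha X,\beta Y)$ with $\alpha,\beta>0$ chosen so that $\sqrt{\theta}\,\alpha=\sqrt{\bar\theta}\,\beta=\sigma$ for a free scale $\sigma>0$, so that the argument of $N$ on the left becomes $\sigma(X+Y)$. Writing $c:=\mathsf{p}(X)/N(X)+\mathsf{p}(Y)/N(Y)$ and collecting terms, \eqref{prop:interp} reduces to $\log N(X+Y)\le \log\!\big(N(X)N(Y)\big)+wc-1-\log w$, where the composite parameter $w:=\theta\bar\theta/\sigma^2$ ranges over all of $(0,\infty)$ as $\theta\in(0,1)$ and $\sigma>0$ vary. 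Minimizing the right-hand side over $w$ (the optimum is $w=1/c$) collapses the exponential bound back to the linear one and yields $N(X+Y)\le N(X)N(Y)\,c = N(Y)\mathsf{p}(X)+N(X)\mathsf{p}(Y)$, which is \eqref{REPI}; the reverse Fisher information inequality \eqref{RFII} then follows from \eqref{REPI} and Stam's inequality $\mathsf{p}(X+Y)\ge1$ exactly as in the proof of Theorem \ref{thm:REPI}. The delicate point, which is where the full strength of the ``for all $\theta$'' in Theorem \ref{thm:main} is used, is that $\theta$ alone cannot push $w$ to the optimizer $1/c$ (since $\theta\bar\theta\le\tfrac14$); it is precisely the extra common-scale freedom $\sigma$, legitimate because $\mathsf{p}$ is scale-invariant and \eqref{prop:interp} holds for every pair, that makes $w=1/c$ attainable and recovers the sharp tangent-line constant in \eqref{REPI}.
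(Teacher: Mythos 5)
Your proof is correct and follows essentially the same route as the paper's: the same dictionary between the relative functionals $D,I$ and the absolute ones $h,J,N,\mathsf{p}$, the bound $\log t\le t-1$ for the direction Theorem \ref{thm:REPI} $\Rightarrow$ Theorem \ref{thm:main}, and optimization of a free scale parameter via $1+\log a=\inf_{s>0}(as-\log s)$ for the converse. The only cosmetic difference is that you realize the scale freedom by rescaling the random vectors themselves, whereas the paper equivalently varies the variance $s$ of the reference Gaussian $G_s$.
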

\begin{proof}
For convenience, we recall the scaling properties $N(t Z) = t^2 N(Z)$ and $t^2 J(t Z) = J(Z)$. Also, if $G_s\sim N(0, s\mathrm{I})$, the relative entropy $D(Z\|G_s)$ and Fisher information $I(Z\|G_s)$  are related to $h(Z)$ and $J(Z)$ via 
\begin{align}
 h\left( Z\right) -\frac{d}{2}\log(2 \pi e s) &= -D(Z\|G_s) + \frac{1}{2 s}\EE|Z|^2-\frac{d}{2}\label{relEnt_s}\\
J(Z) &= I(Z\|G_s) +\frac{2}{s}d - \frac{1}{s^2}\EE|Z|^2,\label{relFI_s}
\end{align} 
holding for any random vector $Z$ on $\mathbb{R}^d$ with $\EE|Z|^2<\infty$.  Further, $N(\cdot),J(\cdot)$ are translation invariant, so we may assume without loss of generality that all random vectors are centered.

$\bullet$ Proof of \eqref{REPI} $\Rightarrow$ \eqref{mainInequality}:   We assume $I(X),I(Y)<\infty$, else \eqref{mainInequality} is a tautology.  Now, finiteness of $I(X)$ implies $\EE|X|^2<\infty$, and similarly for $Y$ (see, e.g., \cite[Proof of Thm.~5]{carlen1991superadditivity}).   
Using the  scaling properties of $N(\cdot)$ and $J(\cdot)$, \eqref{REPI} implies
\begin{align}
N\left(  \sqrt{\theta}X+ \sqrt{\bar\theta}Y \right) \leq N(X) N(Y) \left( \frac{\bar \theta J( X) + \theta J(Y)}{d}\right),
\end{align}
Now, taking logarithms,  multiplying through by $d/2$ and recalling $\log x\leq x-1$, we have:
\begin{align}
h\left(  \sqrt{\theta}X+ \sqrt{\bar\theta}Y \right) -\frac{d}{2}\log(2 \pi e) &\leq h\left( X\right) -\frac{d}{2}\log(2 \pi e)+h\left(  Y \right) -\frac{d}{2}\log(2 \pi e)  \\
&\phantom{=}+  \frac{d}{2} \log\left( \frac{\bar \theta J( X) + \theta J(Y)}{d}\right)\notag \\
&\leq h\left( X\right) -\frac{d}{2}\log(2 \pi e)+h\left(  Y \right) -\frac{d}{2}\log(2 \pi e)  \\
&\phantom{\leq}+\frac{1}{2}\left(\bar \theta J(X)+\theta J(Y) \right)   -\frac{d}{2}.\notag
\end{align}
Now, \eqref{mainInequality} follows from the identities \eqref{relEnt_s} and \eqref{relFI_s} for $s=1$.

$\bullet$ Proof of \eqref{mainInequality} $\Rightarrow$ \eqref{REPI}: 
We may assume $X,Y$ have finite Fisher information and second moments.  With this assumption in place, consider any  $s>0$ and observe via straightforward manipulation using the identities \eqref{relEnt_s}-\eqref{relFI_s}  that \eqref{mainInequality} is equivalent to 
\begin{align}
D(X\|G_s) + D(Y\|G_s) \leq s \left( \frac{\bar\theta}{2} I(X\|G_s)+ \frac{\theta}{2}I(Y\|G_s)\right)  + D\left( \sqrt{\theta}X+ \sqrt{\bar\theta}Y \big \|G_s\right). \label{convLSIs}
\end{align}
 Hence, using \eqref{relEnt_s}-\eqref{relFI_s} again and rearranging, we find that this is the same as
\begin{align}
\log N\left( \sqrt{\theta}X+ \sqrt{\bar\theta}Y \right) \leq  \log N(X) N(Y)  + s \frac{\bar \theta J(X) + \theta J(Y)}{d} - \log s - 1.
\end{align}
Recalling $1 + \log a = \inf_{s>0} \left( as - \log s\right)$, we may minimize the RHS over $s>0$ to obtain
\begin{align}
d N\left( \sqrt{\theta}X+ \sqrt{\bar\theta}Y \right) &\leq N(X)N(Y)  \left( \bar \theta J(X) + \theta J(Y)\right)\\
&= N(\sqrt{\theta}X)N(\sqrt{\bar \theta}Y)  \left( J(\sqrt{\theta} X) +  J(\sqrt{\bar\theta} Y)\right),
\end{align}
where   the last equality follows via the scaling properties of $N(\cdot)$ and $J(\cdot)$.  A simple rescaling recovers \eqref{REPI} and completes the proof.
\end{proof}

\section{Applications} \label{sec:Apps}

\subsection{Short-term  convergence rates in information-theoretic CLTs}

Let $Z$ be a centered random vector on $\mathbb{R}^d$ with $\Cov(Z)=\mathrm{I}$, and define the normalized sums $U_n = \tfrac{1}{\sqrt n} \sum_{k=1}^n Z_k$, where $Z_1, Z_2, \dots, Z_n$ are independent copies of $Z$.  The entropic central limit theorem due to Barron \cite{barron1986entropy} asserts that $D(U_n)\to 0$, provided $D(U_{n_0})<\infty$  for some $n_0$.  Likewise, the CLT for Fisher information, due to Barron and Johnson \cite{johnson2004fisher}, asserts that $I(U_n)\to 0$, provided $I(U_{n_0})<\infty$  for some $n_0$.  

In 2004,  Artstein, Ball, Barthe and Naor established that each of these limit theorems enjoy monotone convergence \cite{artstein2004solution}. However, optimal estimates on the convergence rate remained open until recently.  On this front, Bobkov, Chistyakov, Gennadiy  and G{\"o}tze \cite{bobkov2013rate, bobkov2014berry} have  settled a longstanding conjecture and shown that under moment conditions
\begin{align}
D(U_n) = O(1/n), \label{Bobkov}
\end{align}
which is consistent with the convergence rates predicted by  the Berry-Esseen theorem. Although explicit constants are given for the $O(1/n)$ term as a function of the moments and  $D(Z)$, the  proof invokes local limit theorems for Edgeworth expansions, so the  $o(1/n)$ terms  are not explicitly quantified for finite $n$.  Hence, although \eqref{Bobkov} provides good long-term estimates on convergence in the entropic CLT, it does not immediately  provide any information about the short-term behavior of $D(U_n)$.    

The next result partially  addresses this issue by establishing a \emph{lower} bound on $D(U_n)$ in terms of $\dLSI(Z)$ and $n$.  Roughly speaking, if $\dLSI(Z) \ll D(Z)$, then $D(U_n)$ is assured to  decay slowly on short time scales.  A similar result holds for Fisher information.  That is, if $\dLSI(Z) \ll I(Z)$, then $I(U_n)$  will decay slowly on short time scales.  More precisely, each of these quantities decay at most linearly in $n$, with slope  $\dLSI(Z)$.

\begin{theorem}\label{thm:CLT}
Let $Z$ be a centered random vector on $\mathbb{R}^d$ and define the normalized sums $U_n = \tfrac{1}{\sqrt n} \sum_{k=1}^n Z_k$, where $Z_1, Z_2, \dots, Z_n$ are independent copies of $Z$. The sequence $\{ \delta_{\mathsf{LSI}}(U_n) , n\geq 1\}$ is subadditive.  Moreover, the  following hold for all $n\geq 1$:
\begin{align}
 D(U_n ) &\geq D(Z)  - (n-1) \delta_{\mathsf{LSI}}(Z) \label{entCLT}\\
 \frac{1}{2} I(U_n ) &\geq \frac{1}{2}  I(Z)  - n\,  \delta_{\mathsf{LSI}}(Z) + \delta_{\mathsf{LSI}}(U_n) .\label{fiCLT}
\end{align}

\end{theorem}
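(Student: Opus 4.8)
The plan is to treat the three assertions in order, reducing the two quantitative estimates to a single telescoping inequality that is driven by Theorem~\ref{thm:main}. Throughout, $D$ and $I$ are taken relative to $\phi$, so that $\dLSI(W)=\tfrac12 I(W)-D(W)$, and I may assume $I(Z)<\infty$ (hence $\EE|Z|^2<\infty$ and $U_n$ is well defined and centered); otherwise $\dLSI(Z)=\infty$ and both bounds are vacuous.

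First I would establish subadditivity of $\{\dLSI(U_n)\}$. Splitting the summands into two independent blocks and setting $\theta=m/(m+n)$, independence gives the distributional identity
\begin{align}
U_{m+n} \stackrel{d}{=} \sqrt{\tfrac{m}{m+n}}\, U_m + \sqrt{\tfrac{n}{m+n}}\, \tilde U_n,\notag
\end{align}
where $\tilde U_n$ is an independent copy of $U_n$ built from the second block. Since $\dLSI(\cdot)$ depends only on the law of its argument, Corollary~\ref{cor:convLSI} applied with this $\theta$ yields $\dLSI(U_{m+n}) \le \dLSI(U_m) + \dLSI(U_n)$, which is the claimed subadditivity.

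Next I would observe that \eqref{entCLT} and \eqref{fiCLT} are equivalent. Substituting $\dLSI(W)=\tfrac12 I(W)-D(W)$ for $W\in\{Z,U_n\}$ and cancelling the $\tfrac12 I(U_n)$ terms, each of the two bounds rearranges to the single inequality
\begin{align}
n\,D(Z) - D(U_n) \le \tfrac{n-1}{2} I(Z),\notag
\end{align}
so it suffices to prove \eqref{entCLT}. The engine is a one-step estimate: applying Theorem~\ref{thm:main} to the independent centered pair $(U_{n-1},Z_n)$ with $\theta=(n-1)/n$, and using $U_n \stackrel{d}{=}\sqrt{\tfrac{n-1}{n}}\,U_{n-1}+\sqrt{\tfrac1n}\,Z_n$, gives
\begin{align}
D(U_{n-1}) - D(U_n) \le \tfrac{1}{2n} I(U_{n-1}) + \tfrac{n-1}{2n} I(Z) - D(Z).\notag
\end{align}
To control the right-hand side I would invoke monotonicity of relative Fisher information, $I(U_{n-1})\le I(Z)$, itself proved by induction from the FII \eqref{relFII}: if $I(U_{n-1})\le I(Z)$, then $I(U_n)\le \tfrac{n-1}{n}I(U_{n-1})+\tfrac1n I(Z)\le I(Z)$. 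Feeding $I(U_{n-1})\le I(Z)$ into the display collapses it to $D(U_{n-1})-D(U_n)\le \tfrac12 I(Z)-D(Z)=\dLSI(Z)$; summing over $k=2,\dots,n$ telescopes to $D(Z)-D(U_n)\le (n-1)\dLSI(Z)$, which is \eqref{entCLT}, and the equivalence above then delivers \eqref{fiCLT}.

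The main obstacle is precisely the middle step: Theorem~\ref{thm:main} by itself leaves the spurious term $\tfrac1{2n}I(U_{n-1})$, and without handling it the telescoping slope would degrade. The crux is recognizing that coupling Theorem~\ref{thm:main} with the separately established Fisher-information monotonicity is exactly what absorbs this term and recovers the clean per-step increment $\dLSI(Z)$. Everything else is routine bookkeeping with the identity $\dLSI=\tfrac12 I-D$ and the nonnegativity and finiteness facts noted above.
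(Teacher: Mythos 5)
Your proof is correct, and it runs on the same engine as the paper's: Theorem \ref{thm:main} applied to a block decomposition of $U_{n}$, combined with the monotonicity $I(U_m)\leq I(Z)$ obtained from the FII \eqref{relFII}. The differences are organizational but worth noting. The paper applies Theorem \ref{thm:main} to a general $(n,m)$ split and runs two separate inductions on $n+m$, one for \eqref{entCLT} and one for \eqref{fiCLT}; you instead observe that, under the convention $\dLSI=\tfrac12 I - D$ (with all quantities finite once $I(Z)<\infty$, since the LSI and the FII give $D(U_k)\leq\tfrac12 I(U_k)\leq \tfrac12 I(Z)$), the two displayed bounds are algebraically equivalent, so only one needs to be proved --- a small but genuine simplification the paper does not exploit. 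You also specialize to the single-increment split $U_n\stackrel{d}{=}\sqrt{(n-1)/n}\,U_{n-1}+\sqrt{1/n}\,Z_n$ and telescope, which replaces the induction on $n+m$ with a direct sum of per-step deficits, each bounded by $\dLSI(Z)$ after absorbing the spurious $\tfrac1{2n}I(U_{n-1})$ term via Fisher-information monotonicity --- exactly the same absorption the paper performs in its inductive step. Your subadditivity argument via Corollary \ref{cor:convLSI} is likewise the paper's argument in disguise, since that corollary is itself derived from \eqref{fiBase} plus the FII. In short: same key ingredients, marginally leaner bookkeeping; no gaps.
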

\begin{proof}
We apply Theorem \ref{thm:main} with $\theta=\frac{n}{n+m}$, $X = U_n$ and $Y = U_m$.  In this case, $\sqrt{\theta}X+ \sqrt{\bar\theta}Y$ is equal to $U_{n+m}$ in distribution, so we obtain the inequality
\begin{align}
D(U_{n+m}) \geq D(U_n) + D(U_m) - \frac{m}{2(m+n)} I(U_n) -  \frac{n}{2(m+n)} I(U_m),\label{entBase}
\end{align}
or, equivalently
\begin{align}
\delta_{\mathsf{LSI}}(U_{m}) +\delta_{\mathsf{LSI}}(U_{n})  \geq  \delta_{\mathsf{LSI}}(U_{m+n}) + \frac{n}{2(m+n)} I(U_n) +  \frac{m}{2(m+n)} I(U_m)-\frac{1}{2}I(U_{m+n}).\label{fiBase}
\end{align}
Subadditivity of $\delta_{\mathsf{LSI}}(U_{n})$ follows from applying \eqref{relFII}.

Now, the proof of \eqref{entCLT} and \eqref{fiCLT} will follow  by induction on $n+m$.   We first prove \eqref{entCLT}.  The base case for $n+m=2$ is immediate from \eqref{entBase} with $n=m=1$.  So, by induction, we have
\begin{align}
D(U_{n+m}) %
&\geq D(Z) - (n-1)  \delta_{\mathsf{LSI}}(Z)  + D(Z) -  (m-1) \delta_{\mathsf{LSI}}(Z) \\
&\phantom{\geq}- \frac{1}{2}\left( \frac{m}{m+n} I(U_n) + \frac{n}{m+n} I(U_m)\right)\notag \\
&= D(Z) - (n+m-1)  \delta_{\mathsf{LSI}}(Z) +\frac{1}{2}I(Z) - \frac{1}{2}\left( \frac{m}{m+n} I(U_n) +  \frac{n}{m+n} I(U_m)\right)\\ 
&\geq D(Z) - (n+m-1)  \delta_{\mathsf{LSI}}(Z),
\end{align}
where the final inequality is due to $I(Z) \geq I(U_m)$ for $m\geq 1$, a  consequence of \eqref{relFII}.

Now, we aim to prove \eqref{fiCLT}.  
The base case $n+m=2$ is immediate from \eqref{entBase} with $n=m=1$.  Thus, by the inductive hypothesis, we have
\begin{align}
&(m+n) \delta_{\mathsf{LSI}}(Z)\notag\\
&\geq \delta_{\mathsf{LSI}}(U_{m}) + \frac{1}{2}\left( I(Z|G)- I(U_{m}|G)\right)\\
&\phantom{\geq}+\delta_{\mathsf{LSI}}(U_{n}) + \frac{1}{2}\left( I(Z|G)- I(U_{n}|G)\right)\notag\\
&\geq \delta_{\mathsf{LSI}}(U_{m+n}) +   I(Z|G) -\left( \frac{m}{2(m+n)} I(U_n|G) +  \frac{n}{2(m+n)} I(U_m|G)+\frac{1}{2}I(U_{m+n}|G)\right)\label{fiIntermed}\\
&\geq \delta_{\mathsf{LSI}}(U_{m+n}) + \frac{1}{2}\left( I(Z|G)- I(U_{m+n}|G)\right),
\end{align}
where \eqref{fiIntermed} is \eqref{fiBase} and, as before,  the final inequality follows due to $I(Z) \geq I(U_m)$ for $m\geq 1$.
\end{proof}

\subsection{The LSI is stable with respect to HWI jumps}\label{sec:HWIjumps}

Recall the quadratic Wasserstein distance between random vectors $X,Y$ is defined according to
\begin{align}
W_2^2(X,Y) = \inf_{Q_{XY}\in \pi(X,Y)} \EE|X-Y|^2,
\end{align}
where the infimum is over all couplings between $X,Y$ that preserve their given marginals $X\sim P_X,Y\sim P_Y$. In the case where $G\sim N(0,\mathrm{I})$, Talagrand's quadratic transportation cost inequality asserts that
\begin{align}
W_2^2(X) := W_2^2(X,G) \leq 2D(X).\label{talagrand}
\end{align}
Talagrand's inequality is closely related to Gross' LSI.  Indeed, Otto and Villani \cite{otto2000generalization} proved a remarkable inequality that interpolates between \eqref{talagrand} and \eqref{GrossLSI}:
\begin{align}
D(X) \leq \sqrt{I(X)} W_2(X) - \frac{1}{2}W_2^2(X).  \label{HWI}
\end{align}
This inequality is referred to as the HWI inequality, since it simultaneously relates the relative entropy (H), Wasserstein distance (W), and  Fisher information (I) functionals. 

As with relative entropy and Fisher information, $W_2^2$ satisfies a convolution inequality (e.g., \cite{villani2003topics}).  That is, if $X,Y$ are independent, centered random vectors on $\mathbb{R}^d$ 
\begin{align}
W_2^2(\sqrt{\theta}X+\sqrt{\bar\theta} Y)\leq \theta W_2^2(X)+\bar\theta W_2^2(Y).
\end{align}
So, if $X,X^*$ are centered i.i.d.~random vectors on $\mathbb{R}^d$, then we have the following inequalities:
\begin{align}
D(\tfrac{1}{\sqrt{2}}(X+X_*)) &\leq D(X)\\
W_2(\tfrac{1}{\sqrt{2}}(X+X_*)) &\leq W_2(X)\\
I(\tfrac{1}{\sqrt{2}}(X+X_*)) &\leq I(X).
\end{align}
We shall use the term \emph{HWI jump} to refer to a deficit in any of the three inequalities above.  An immediate consequence of Theorem \ref{thm:main} and \eqref{HWI} is that the LSI is stable with respect to HWI jumps.  That is, if $X$ exhibits a jump under convolution with respect to any of the HWI functionals, then 
we can plainly lower bound the deficit in the LSI in terms of $I(X)$.

\begin{theorem}\label{thm:HWIjump}
Let $X,X^*$ be centered i.i.d.~random vectors on $\mathbb{R}^d$. If any of the following hold:
\begin{enumerate}[(i)]
\item $D(\tfrac{1}{\sqrt{2}}(X+X_*))\leq (1-\varepsilon)D(X)$, or
\item $W_2(\tfrac{1}{\sqrt{2}}(X+X_*))\leq (1-\varepsilon^{1/2})W_2(X)$, or
\item $I(\tfrac{1}{\sqrt{2}}(X+X_*))\leq (1-\varepsilon)I(X)$, 
\end{enumerate}
then $\delta_{\mathsf{LSI}}(X)\geq \tfrac{\varepsilon}{4}I(X).$
\end{theorem}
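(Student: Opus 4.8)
The plan is to specialize Theorem~\ref{thm:main} to the i.i.d.\ case, extract two complementary base inequalities, and then dispatch each hypothesis with a short argument. Set $S=\tfrac{1}{\sqrt2}(X+X_*)$ and apply Theorem~\ref{thm:main} with $Y=X_*$ and $\theta=\tfrac12$. Since $D(X)=D(X_*)$ and $I(X)=I(X_*)$, the entropy form \eqref{mainInequality} reduces (after rearranging with $\delta_{\mathsf{LSI}}=\tfrac12 I-D$) to
\begin{align*}
\delta_{\mathsf{LSI}}(X)\ \ge\ D(X)-D(S),
\end{align*}
and the equivalent Fisher form \eqref{mainFII} reduces to
\begin{align*}
\delta_{\mathsf{LSI}}(X)\ \ge\ \tfrac12\,\delta_{\mathsf{LSI}}(S)+\tfrac14\big(I(X)-I(S)\big).
\end{align*}
Besides these two inequalities, the only additional inputs I expect to need are the HWI inequality \eqref{HWI}, Talagrand's inequality \eqref{talagrand}, and the FII \eqref{relFII}.

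Cases (iii) and (i) are then immediate. For (iii) I would discard the nonnegative term $\tfrac12\delta_{\mathsf{LSI}}(S)$ from the Fisher base inequality and insert $I(S)\le(1-\varepsilon)I(X)$, obtaining $\delta_{\mathsf{LSI}}(X)\ge\tfrac14(I(X)-I(S))\ge\tfrac{\varepsilon}{4}I(X)$. For (i) the entropy base inequality together with $D(S)\le(1-\varepsilon)D(X)$ gives $\delta_{\mathsf{LSI}}(X)\ge\varepsilon D(X)$; substituting the identity $D(X)=\tfrac12 I(X)-\delta_{\mathsf{LSI}}(X)$ and solving yields $\delta_{\mathsf{LSI}}(X)\ge\tfrac{\varepsilon}{2(1+\varepsilon)}I(X)$, which is at least $\tfrac{\varepsilon}{4}I(X)$ because $\varepsilon\le1$.

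The Wasserstein case (ii) is the crux, and I expect it to be the main obstacle. The tempting move is to apply \eqref{HWI} to $X$, giving $\delta_{\mathsf{LSI}}(X)\ge\tfrac12(\sqrt{I(X)}-W_2(X))^2$; this fails to close, however, because the hypothesis constrains $W_2(S)$ against $W_2(X)$ rather than $W_2(X)$ against $\sqrt{I(X)}$. The resolution I would pursue is to instead apply \eqref{HWI} to the convolution $S$ and to retain the full Fisher base inequality (rather than its weakening used for (iii)). Since \eqref{HWI} for $S$ reads $2\delta_{\mathsf{LSI}}(S)\ge(\sqrt{I(S)}-W_2(S))^2$, the Fisher base inequality becomes
\begin{align*}
4\,\delta_{\mathsf{LSI}}(X)\ \ge\ 2\,\delta_{\mathsf{LSI}}(S)+\big(I(X)-I(S)\big)\ \ge\ \big(\sqrt{I(S)}-W_2(S)\big)^2+\big(I(X)-I(S)\big).
\end{align*}
Here $I(S)\le I(X)$ by the FII \eqref{relFII}, so the elementary identity $\big(\sqrt{I(S)}-t\big)^2+\big(I(X)-I(S)\big)-\big(\sqrt{I(X)}-t\big)^2=2t\big(\sqrt{I(X)}-\sqrt{I(S)}\big)$, applied with $t=W_2(S)\ge0$, shows the right-hand side is at least $\big(\sqrt{I(X)}-W_2(S)\big)^2$. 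It then remains only to convert the hypothesis: Talagrand \eqref{talagrand} and Gross' LSI \eqref{GrossLSI} give $W_2(X)\le\sqrt{I(X)}$, whence $W_2(S)\le(1-\sqrt{\varepsilon})W_2(X)\le(1-\sqrt{\varepsilon})\sqrt{I(X)}$ and therefore $\big(\sqrt{I(X)}-W_2(S)\big)^2\ge\varepsilon\,I(X)$, which is the desired bound. The subtle point—and the reason the naive approach fails—is that the Fisher gap $I(X)-I(S)$ must be kept in order to promote $\sqrt{I(S)}$ to $\sqrt{I(X)}$; it is exactly this term that the weaker argument for (iii) is free to discard.
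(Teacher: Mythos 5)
Your proof is correct and follows essentially the same route as the paper: specialize Theorem \ref{thm:main} to $Y=X_*$, $\theta=1/2$, handle (i) and (iii) directly, and for (ii) apply the HWI inequality to the convolution $\tfrac{1}{\sqrt 2}(X+X_*)$ together with $I(S)\leq I(X)$ (FII) and $W_2(X)\leq\sqrt{I(X)}$ (Talagrand plus LSI). The only difference is cosmetic bookkeeping in case (ii) — your complete-the-square identity using the Fisher gap is algebraically the same step as the paper's replacement of $\sqrt{I(S)}$ by $\sqrt{I(X)}$ in the cross term followed by the quadratic formula.
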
 

Before proving Theorem \ref{thm:HWIjump}, we remark  that there have been a number of recent attempts to give quantitative stability estimates for the LSI.  However, such stability estimates are generally dimension dependent \cite{bobkov2014bounds} or impose strong regularity conditions such as presence of a spectral gap \cite{fathi2014quantitative, indrei2013quantitative}.  In contrast, Theorem \ref{thm:HWIjump} shows that HWI jumps give  a dimension-free estimate of the deficit in the LSI that holds without stringent regularity assumptions. However, HWI jumps do not directly bound the distance of $X$ from normal except under regularity conditions such as presence of a spectral gap ($d=1$) \cite{ball2003entropy} and log-concave   density ($d\geq 2$) \cite{ball2012entropy},  or a radial symmetry assumption ($d\geq 2$) \cite{courtade2016jumps}.   Of course, one cannot hope that $\delta_{\mathsf{LSI}}(X)\geq \varepsilon I(X)$ in general for some absolute constant $\varepsilon$, else it would contradict optimality of the constant   in Gross' LSI.

\begin{proof}[Proof of Theorem \ref{thm:HWIjump}]
The proof is an immediate consequence of Theorem \ref{thm:main} and the HWI inequality, but we include it for completeness. In all cases, we shall apply Theorem \ref{thm:main} with $Y=X_*$ and $\theta=1/2$.   In view of this, if (i) holds, then Theorem \ref{thm:main} implies
\begin{align}
D(X) \leq \frac{1}{2(1+\varepsilon)}I(X) = \frac{1}{2(1+\varepsilon)}I(X)\leq \frac{1}{2}I(X) - \frac{\varepsilon}{4}I(X).
\end{align}

Next, if (ii) holds, then using the HWI inequality and the convolution inequality for Fisher information, we have:
\begin{align}
2 D(X)  &\leq \frac{1}{2} I(X)+ D\left( \tfrac{1}{\sqrt{2}}(X+X_*) \right)\\
&\leq \frac{1}{2} I(X)  + \sqrt{   I(\tfrac{1}{\sqrt{2}}(X+X_*))   } W_2(\tfrac{1}{\sqrt{2}}(X+X_*)) - \frac{1}{2}W_2^2(\tfrac{1}{\sqrt{2}}(X+X_*))\\
&\leq \frac{1}{2} I(X)  + \sqrt{   I(X)   } W_2(\tfrac{1}{\sqrt{2}}(X+X_*)) - \frac{1}{2}W_2^2(\tfrac{1}{\sqrt{2}}(X+X_*)).
\end{align}
Equivalently, 
\begin{align}
 \frac{1}{2} I(X)  \leq 2\dLSI(X)  +  \sqrt{  I(X)   } W_2(\tfrac{1}{\sqrt{2}}(X+X_*)) - \frac{1}{2}W_2^2(\tfrac{1}{\sqrt{2}}(X+X_*)).
\end{align}
By the conjunction of Talagrand's inequality and the LSI,  (ii) and the quadratic formula, we can conclude:
\begin{align}
(1-\varepsilon^{1/2})^2  I(X) 
&\geq
(1-\varepsilon^{1/2})^2 \theta W_2^2(X) \\
&\geq
W_2^2(\tfrac{1}{\sqrt{2}}(X+X_*)) \\
 &\geq \left(  \sqrt{  I(X) } - 2 \sqrt{ \dLSI(X)    }  \right)^2.
\end{align}
Taking square roots, rearranging and squaring again, we obtain the desired inequality.

Finally, if (iii) holds, then the claim is immediate from \eqref{mainFII}.
\end{proof}

\subsection{A sharp form of Nelson's hypercontractivity estimate}

It is well known that Gross' LSI is equivalent to Nelson's hypercontractivity estimate for the Ornstein-Uhlenbeck semigroup \cite{gross1975logarithmic}.  To state Nelson's result, let us first introduce the Ornstein-Uhlenbeck semigroup $(P_t)_{t\geq 0}$ defined on functions $f: \mathbb{R}^d \to \mathbb{R}$ as follows:
\begin{align}
P_t f(x) = \int_{\mathbb{R}^d} f( e^{-t} x + (1-e^{-2t})^{1/2}y) \d \gamma(y),
\end{align}
where $\gamma$ denotes the standard Gaussian measure on $\mathbb{R}^d$.  Nelson's result is as follows:
\begin{theorem}  \cite{nelson1973free}
 For   $f\in L^p(\gamma)$
\begin{align}
\|P_t f\|_{L^q(\gamma)}     \leq  \| f\|_{L^p(\gamma)}   \label{NelsonHC}
\end{align}
for all  $q \geq p >1$ such that $q\leq 1 + (p-1)e^{2t}$.
\end{theorem}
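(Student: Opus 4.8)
The plan is to recover Nelson's estimate from the logarithmic Sobolev inequality in its functional form, which is precisely the content of the LSI embedded in Theorem~\ref{thm:main}. Recall that the Gross LSI $\tfrac12 I(X)\geq D(X)$ of \eqref{GrossLSI} is equivalent, via the substitution $g^2 = \d P_X/\d\gamma$, to the statement
\begin{align}
\Ent_\gamma(g^2) := \int g^2\log g^2\,\d\gamma - \left(\int g^2\,\d\gamma\right)\log\left(\int g^2\,\d\gamma\right)\leq 2\int|\nabla g|^2\,\d\gamma,\label{functionalLSI}
\end{align}
valid for sufficiently regular $g:\mathbb{R}^d\to\mathbb{R}$. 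I would take \eqref{functionalLSI} as the starting point and carry out Gross' original semigroup differentiation argument.

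First I would make two reductions. Since $|P_t f|\leq P_t|f|$ pointwise, it suffices to treat $f\geq 0$; and since $\gamma$ is a probability measure, $\|P_t f\|_{L^q(\gamma)}$ is nondecreasing in $q$, so it is enough to establish \eqref{NelsonHC} along the critical curve $q(t) = 1+(p-1)e^{2t}$ and then recover the full range $q\leq 1+(p-1)e^{2t}$ by this monotonicity. A standard approximation further lets me assume $f$ is bounded and bounded away from $0$, so that the differentiations and integrations by parts below are legitimate.

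The core computation sets $u = P_t f$, $q=q(t)$, $F(t) = \int u^q\,\d\gamma$, and $\alpha(t) = F(t)^{1/q(t)} = \|P_t f\|_{L^{q(t)}(\gamma)}$, with the goal of showing $\tfrac{\d}{\d t}\log\alpha\leq 0$. Using $\partial_t u = Lu$ for the Ornstein--Uhlenbeck generator $L$ together with the Dirichlet-form identity $\int u^{q-1}Lu\,\d\gamma = -(q-1)\int u^{q-2}|\nabla u|^2\,\d\gamma$, a direct differentiation gives
\begin{align}
q^2 F\,\frac{\d}{\d t}\log\alpha = q'\,\Ent_\gamma(u^q) - 4(q-1)\int\left|\nabla (u^{q/2})\right|^2\,\d\gamma,
\end{align}
where I have used $|\nabla(u^{q/2})|^2 = \tfrac{q^2}{4}u^{q-2}|\nabla u|^2$. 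Applying \eqref{functionalLSI} to $g = u^{q/2}$ and using $q'\geq 0$ bounds the right-hand side above by $\left(2q'-4(q-1)\right)\int|\nabla(u^{q/2})|^2\,\d\gamma$. Along the critical curve one has $q'=2(q-1)$ identically, so this quantity vanishes, whence $\tfrac{\d}{\d t}\log\alpha\leq 0$. Integrating over $[0,t]$ yields $\alpha(t)\leq\alpha(0)=\|f\|_{L^p(\gamma)}$, which is \eqref{NelsonHC} at the critical exponent; monotonicity in $q$ then completes the proof.

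The main obstacle is analytic rather than algebraic: justifying differentiation of $F$ under the integral, validating the integration by parts against $\gamma$, and controlling $\int u^q\log u$ near the boundary of integrability all hinge on the regularization of $f$ away from $0$ and $\infty$, followed by a careful limiting argument. The decisive point is that the LSI constant $2$ in \eqref{functionalLSI} is exactly what forces $2q'=4(q-1)$ on Nelson's critical curve; this is the precise manifestation of Gross' equivalence and is the source of the sharp exponent $1+(p-1)e^{2t}$.
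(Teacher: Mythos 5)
Your proposal is correct and follows essentially the same route as the paper: Gross' semigroup argument, differentiating $\|P_t f\|_{L^{q(t)}(\gamma)}$ along the critical curve $q(t)=1+(p-1)e^{2t}$ and applying the LSI to make the derivative nonpositive. Your identity $q^2F\,\tfrac{\d}{\d t}\log\alpha = q'\,\Ent_\gamma(u^q)-4(q-1)\int|\nabla(u^{q/2})|^2\,\d\gamma$ is exactly the paper's displayed derivative formula in terms of $D(X_t)$ and $I(X_t)$, rescaled by $q^2 F$, with the tilted variable $X_t$ having density proportional to $(P_tf)^{q(t)}$ with respect to $\gamma$.
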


The essential idea  behind  Gross' proof of \eqref{NelsonHC} from the LSI is  as follows: Let $q(t) = 1 + (p-1)e^{2t}$.  If $f\geq 0$ is a  smooth function, then $\| P_t f\|_{L^{q(t)}(\gamma)}$ is differentiable on $t\in[0,\infty)$ with derivative
\begin{align}
 \frac{\d}{\d t} \log \left( \| P_t f\|_{L^{q(t)}(\gamma)} \right) %
&=\frac{q'(t)}{q^2(t)} D(X_t) -   \frac{2(q(t) - 1)}{q^2(t)     } \frac{1}{2}I(X_t),
\end{align}
where $X_t$ is the random variable having  density (with respect to $\gamma$) proportional to $(P_t f)^{q(t)}$.  On account of the fact that $q'(t) = 2(q(t)-1)$, Gross' LSI implies
\begin{align}
\frac{\d}{\d t} \| P_t f\|_{L^{q(t)}(\gamma)} \leq 0.
\end{align}
Since $\left. \| P_t f\|_{L^{q(t)}(\gamma)} \right|_{t=0} = \| f\|_{L^p(\gamma)}$, Nelson's result follows for smooth $f$.  The extension to $f \in L^p(\gamma)$ follows by a density argument. The reverse implication is also apparent.  That is, in order for \eqref{NelsonHC} to hold for smooth $f\geq 0$, we must have $ \frac{\d}{\d t} \| P_t f\|_{L^{q(t)}(\gamma)} \leq 0$ at $t=0$, which is precisely Gross' LSI for the random variable $X$  having  density (with respect to $\gamma$) proportional to $f^p$. Indeed, by the semigroup property, Nelson's inequality is completely characterized by the behavior of $\| P_t f\|_{L^{q(t)}(\gamma)}$ in a neighborhood of $t=0$. 

Extending this to the setting of Theorem \ref{thm:main} is straightforward. Before we state the extension, let us introduce some notation.  For  $f,g\in L^p(\gamma)$, let $X$ be the random vector having density (with respect to $\gamma$) proportional to $|f|^p$, and let $Y$ be the random vector having density (with respect to $\gamma$) proportional to $|g|^p$.  Further, define their centered counterparts $\hat X = X-\EE X$ and $\hat Y =Y-\EE Y$, and the associated entropy production functional:
\begin{align}
E_{p,\theta}(f,g) := \theta D(\hat X) + \bar\theta D(\hat Y) - D(\sqrt{\theta} \hat X + \sqrt{\bar \theta } \hat Y).
\end{align}
We have the following improvement to Nelson's result, which interpolates between the hypercontractive estimates for two functions $f,g\in L^p(\gamma)$:
\begin{theorem} \label{thm:NelsonNew}
Let $p'$ denote the H\"older conjugate of $p$.  For smooth functions $f,g\in L^p(\gamma)$, 
\begin{align}
 \| P_t f\|^{\bar \theta }_{L^{q}(\gamma)} \| P_t g\|^{  \theta }_{L^{q}(\gamma)}  \leq \exp\left( - \frac{2\, t }{p \,p'\,} E_{p,\theta}(f,g) + o(t)   \right)  \| f\|_{L^p(\gamma)}^{\bar \theta} \|  g\|_{L^p(\gamma)}^{  \theta} 
\end{align}
for all  $q \geq p >1\geq \theta \geq 0$ such that $q\leq 1 + (p-1)e^{2t}$. 
\end{theorem}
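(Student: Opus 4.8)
The plan is to mimic Gross' differentiation argument, but applied simultaneously to $f$ and $g$ and closed using Theorem \ref{thm:main} in place of the bare LSI. Set $q(t) = 1 + (p-1)e^{2t}$ and consider the combined log-quantity
\[
A(t) := \bar\theta \log\|P_t f\|_{L^{q(t)}(\gamma)} + \theta\log\|P_t g\|_{L^{q(t)}(\gamma)},
\]
so that the left-hand side of the claimed inequality is exactly $e^{A(t)}$. First I would apply the derivative formula recalled in the discussion preceding the theorem to each factor separately. Writing $X_t$ (resp.\ $Y_t$) for the random vector with density proportional to $(P_t f)^{q(t)}$ (resp.\ $(P_t g)^{q(t)}$) with respect to $\gamma$, and using $q'(t) = 2(q(t)-1)$, Gross' computation collapses to $\frac{\d}{\d t}\log\|P_t f\|_{L^{q(t)}(\gamma)} = -\frac{2(q(t)-1)}{q^2(t)}\dLSI(X_t)$ and likewise for $g$, whence
\[
A'(t) = -\frac{2(q(t)-1)}{q^2(t)}\bigl(\bar\theta\,\dLSI(X_t) + \theta\,\dLSI(Y_t)\bigr).
\]

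The key step is to evaluate this at $t=0$ and invoke Theorem \ref{thm:main}. Since $q(0) = p$, the density of $X_0$ is proportional to $f^p$, so $X_0 = X$ and $Y_0 = Y$ in the notation of the theorem, and the prefactor becomes $\frac{2(p-1)}{p^2} = \frac{2}{p\,p'}$, using $p' = p/(p-1)$. Because $\dLSI(\cdot)$ is translation invariant, $\dLSI(X) = \dLSI(\hat X)$ and $\dLSI(Y) = \dLSI(\hat Y)$; applying the equivalent LSI-form of Theorem \ref{thm:main} to the centered vectors $\hat X, \hat Y$ yields exactly
\[
\bar\theta\,\dLSI(\hat X) + \theta\,\dLSI(\hat Y) \geq \theta D(\hat X) + \bar\theta D(\hat Y) - D(\sqrt{\theta}\hat X + \sqrt{\bar\theta}\hat Y) = E_{p,\theta}(f,g).
\]
Combining the two displays gives $A'(0) \leq -\frac{2}{p\,p'}\,E_{p,\theta}(f,g)$.

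Finally, since the target records only first-order behavior, I would write the Taylor expansion $A(t) = A(0) + t\,A'(0) + o(t)$, note $A(0) = \bar\theta\log\|f\|_{L^p(\gamma)} + \theta\log\|g\|_{L^p(\gamma)}$, substitute the bound on $A'(0)$, and exponentiate to obtain the claim for the endpoint exponent $q = 1+(p-1)e^{2t}$. The extension to all smaller $q$ is immediate from monotonicity of $L^q(\gamma)$-norms in $q$ for the probability measure $\gamma$, which only decreases the left-hand side.

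The main obstacle I anticipate is purely technical: justifying that $A$ is differentiable at $t=0$ with the stated derivative, i.e.\ differentiating the norm and the semigroup action together, and that the remainder is genuinely $o(t)$. This is precisely the regularity underlying Gross' original computation, so under the smoothness hypothesis on $f,g$ it transfers verbatim. The genuinely new content — recognizing that the coefficient $2(p-1)/p^2$ equals $2/(p\,p')$, and that Theorem \ref{thm:main} converts the combination $\bar\theta\,\dLSI(X)+\theta\,\dLSI(Y)$ into exactly the entropy-production functional $E_{p,\theta}(f,g)$ — requires no new estimates beyond translation invariance of $\dLSI(\cdot)$.
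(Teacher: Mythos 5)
Your proposal is correct and follows essentially the same route as the paper: Gross' differentiation of $\log\|P_t f\|_{L^{q(t)}(\gamma)}$ at $t=0$, the identification $q'(0)/q^2(0) = 2(p-1)/p^2 = 2/(p\,p')$, translation invariance of $\dLSI(\cdot)$, and the rearranged form of Theorem \ref{thm:main} bounding $\bar\theta\,\dLSI(X)+\theta\,\dLSI(Y)$ below by $E_{p,\theta}(f,g)$, followed by a first-order Taylor expansion. Your explicit remark that the case $q < 1+(p-1)e^{2t}$ follows from monotonicity of $L^q(\gamma)$-norms is a small point the paper leaves implicit, but it changes nothing substantive.
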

\begin{proof}
We proceed in a manner identical to Gross' argument outlined above.  In particular, we may assume $f,g\geq 0$.  Then, using the definitions of $X,Y$ implicit in the definition of $E_{p,\theta}(f,g)$, we have
\begin{align}
\left. \frac{\d}{\d t} \log \left( \| P_t f\|^{\bar \theta }_{L^{q(t)}(\gamma)} \| P_t g\|^{  \theta }_{L^{q(t)}(\gamma)} \right) \right|_{t=0}
&=- \bar \theta \frac{q'(0)}{q^2(0)} \dLSI(X)   -   \theta \frac{q'(0)}{q^2(0)}  \dLSI(Y)\\
&\leq - \frac{2(p-1)}{p^2} \left( \theta D(\hat X) + \bar\theta D(\hat Y) - D(\sqrt{\theta} \hat X + \sqrt{\bar \theta } \hat Y) \right)\\
&= - \frac{2(p-1)}{p^2} E_{p,\theta}(f,g).
\end{align}
where the  inequality follows from Theorem \ref{thm:main}, translation invariance of $\dLSI(\cdot)$ and definition of $q(t)$. Thus, since $\| P_t f\|^{\bar \theta }_{L^{q(t)}(\gamma)} \| P_t g\|^{  \theta }_{L^{q(t)}(\gamma)}$ is differentiable in $t\in[0,\infty)$ by the smoothness assumption, it follows that 
\begin{align}
\log \left( \| P_t f\|^{\bar \theta }_{L^{q(t)}(\gamma)} \| P_t g\|^{  \theta }_{L^{q(t)}(\gamma)} \right) \leq \log\left(  \| f\|_{L^p(\gamma)}^{\bar \theta} \|  g\|_{L^p(\gamma)}^{  \theta}\right) - t \frac{2(p-1)}{p^2} E_{p,\theta}(f,g) + o(t) 
\end{align}
as desired. 
\end{proof}
\begin{remark}
The smoothness assumption on $f,g$ cannot be immediately eliminated in Theorem \ref{thm:NelsonNew} by a density argument.  Indeed, smooth functions may be dense in $L^p(\gamma)$, but the functional $E_{p,\theta}(f,g)$ is not necessarily (semi-)continuous in its arguments with respect to the $L^p(\gamma)$-norm, since relative entropy is only weakly lower semicontinuous in general.  %
\end{remark}

\section{Concluding Remarks}\label{sec:Conc}

To close, we briefly speculate on the potential for geometric analogues of Theorem \ref{thm:REPI}.  The EPI is often compared to the Brunn-Minkowski inequality, which states that for two nonempty compact subsets $A,B$ of $\mathbb{R}^d$, 
\begin{align}
 \operatorname{Vol}  (A+B)^{1/d}\geq  \operatorname{Vol}  (A)^{1/d}+ \operatorname{Vol} (B)^{1/d},
\end{align}
where $A + B$ denotes the Minkowski sum $A+B:=\{\,a+b\in \mathbb {R} ^{d}\mid a\in A,\ b\in B\,\}$.  Indeed, if $X$ is uniformly distributed on $A$, then $2\pi e \,N(X) =  \operatorname{Vol}  (A)^{2/d}$.  As already seen in Section \ref{sec:REPI}, the reverse EPI due to Bobkov and Madiman compares similarly with   Milman's reverse {B}runn--{M}inkowski inequality, which  holds when $A,B$ are convex.  

The classical isoperimetric inequality states that, for a sufficiently regular  subset  $A $ of $\mathbb{R}^d$, the surface area $|\partial A|$ exceeds that of $\mathcal{B}_A$, a ball in $\mathbb{R}^d$ with the same volume as $A$.  Notably, the isoperimetric inequality can be derived from the Brunn-Minkowski inequality in a manner very similar to how Stam's inequality \eqref{stamIneq}  follows from the EPI (e.g., \cite{costa1984similarity}).  Thus, there is a strong analogy between these information-theoretic inequalities and their geometric counterparts.  

In light of this, we are moved to speculate that a geometric analogue to Theorem \ref{thm:main} may hold.  For example, it seems reasonable to posit the following for sufficiently regular $A,B\subset\mathbb{R}^d$:
\begin{align}
  \operatorname{Vol}  (A+B)^{1/d} \leq  \left( \operatorname{Vol}  (A)^{1/d}+ \operatorname{Vol} (B)^{1/d}\right) \left( \lambda \frac{|\partial A |}{|\partial \mathcal{B}_A |} + \bar{\lambda} \frac{|\partial B |}{|\partial \mathcal{B}_B |} \right),
\end{align}
where $\lambda$ is chosen to satisfy $\lambda/(1-\lambda) =  \operatorname{Vol}  (B)^{1/d}  /  \operatorname{Vol}  (A)^{1/d}$.  As of now, we have made no  attempt to prove or disprove this inequality.

Finally, analogous to \eqref{subStamDefect}, we might expect to find that, for sufficiently regular $A,B\subset\mathbb{R}^d$
\begin{align}
\frac{|\partial (A+B) |}{|\partial \mathcal{B}_{A+B} |} \leq   \frac{|\partial A |}{|\partial \mathcal{B}_A |}   \frac{|\partial B |}{|\partial \mathcal{B}_B |} .
\end{align}
Again, we have made no  attempt to prove or disprove this inequality.

\subsection*{Acknowledgment}
This work was supported in part by NSF grants CCF-1528132 and CCF-0939370 (Center for Science of Information).

\bibliographystyle{IEEEtran}
\bibliography{jumpsBib}

\begin{thebibliography}{10}
\providecommand{\url}[1]{#1}
\csname url@samestyle\endcsname
\providecommand{\newblock}{\relax}
\providecommand{\bibinfo}[2]{#2}
\providecommand{\BIBentrySTDinterwordspacing}{\spaceskip=0pt\relax}
\providecommand{\BIBentryALTinterwordstretchfactor}{4}
\providecommand{\BIBentryALTinterwordspacing}{\spaceskip=\fontdimen2\font plus
\BIBentryALTinterwordstretchfactor\fontdimen3\font minus
  \fontdimen4\font\relax}
\providecommand{\BIBforeignlanguage}[2]{{%
\expandafter\ifx\csname l@#1\endcsname\relax
\typeout{** WARNING: IEEEtran.bst: No hyphenation pattern has been}%
\typeout{** loaded for the language `#1'. Using the pattern for}%
\typeout{** the default language instead.}%
\else
\language=\csname l@#1\endcsname
\fi
#2}}
\providecommand{\BIBdecl}{\relax}
\BIBdecl

\bibitem{stam1959some}
A.~Stam, ``Some inequalities satisfied by the quantities of information of
  {F}isher and {S}hannon,'' \emph{Information and Control}, vol.~2, no.~2, pp.
  101--112, 1959.

\bibitem{blachman1965convolution}
N.~M. Blachman, ``The convolution inequality for entropy powers,''
  \emph{Information Theory, IEEE Transactions on}, vol.~11, no.~2, pp.
  267--271, 1965.

\bibitem{gross1975logarithmic}
L.~Gross, ``Logarithmic {S}obolev inequalities,'' \emph{American Journal of
  Mathematics}, vol.~97, no.~4, pp. 1061--1083, 1975.

\bibitem{carlen1991superadditivity}
E.~A. Carlen, ``Superadditivity of {F}isher's information and logarithmic
  {S}obolev inequalities,'' \emph{Journal of Functional Analysis}, vol. 101,
  no.~1, pp. 194--211, 1991.

\bibitem{talagrand1996transportation}
M.~Talagrand, ``Transportation cost for {G}aussian and other product
  measures,'' \emph{Geometric \& Functional Analysis GAFA}, vol.~6, no.~3, pp.
  587--600, 1996.

\bibitem{ledoux2005concentration}
M.~Ledoux, \emph{The concentration of measure phenomenon}.\hskip 1em plus 0.5em
  minus 0.4em\relax American Mathematical Soc., 2005, no.~89.

\bibitem{bobkov2015entropy}
S.~G. Bobkov and G.~P. Chistyakov, ``Entropy power inequality for the
  {R}{\'e}nyi entropy,'' \emph{IEEE Transactions on Information Theory},
  vol.~61, no.~2, pp. 708--714, 2015.

\bibitem{bobkov2012reverse}
S.~Bobkov and M.~Madiman, ``Reverse {B}runn--{M}inkowski and reverse entropy
  power inequalities for convex measures,'' \emph{Journal of Functional
  Analysis}, vol. 262, no.~7, pp. 3309--3339, 2012.

\bibitem{milman1986inegalite}
V.~D. Milman, ``In{\'e}galit{\'e} de {B}runn-{M}inkowski inverse et
  applications \`a la th{\'e}orie locale des espaces norm{\'e}s,'' \emph{CR
  Acad. Sci. Paris}, vol. 302, no.~1, pp. 25--28, 1986.

\bibitem{madiman2016forward}
M.~Madiman, J.~Melbourne, and P.~Xu, ``Forward and reverse entropy power
  inequalities in convex geometry,'' \emph{arXiv preprint arXiv:1604.04225},
  2016.

\bibitem{ball2015reverse}
K.~Ball, P.~Nayar, and T.~Tkocz, ``A reverse entropy power inequality for
  log-concave random vectors,'' \emph{arXiv preprint arXiv:1509.05926}, 2015.

\bibitem{costa1985new}
M.~Costa, ``A new entropy power inequality,'' \emph{IEEE Transactions on
  Information Theory}, vol.~31, no.~6, pp. 751--760, 1985.

\bibitem{dembo1989simple}
A.~Dembo, ``Simple proof of the concavity of the entropy power with respect to
  added gaussian noise,'' \emph{IEEE Transactions on Information Theory},
  vol.~35, no.~4, pp. 887--888, 1989.

\bibitem{villani2000short}
C.~Villani, ``A short proof of the ?concavity of entropy power?'' \emph{IEEE
  Transactions on Information Theory}, vol.~46, no.~4, pp. 1695--1696, 2000.

\bibitem{bakry2013analysis}
D.~Bakry, I.~Gentil, and M.~Ledoux, \emph{Analysis and geometry of Markov
  diffusion operators}.\hskip 1em plus 0.5em minus 0.4em\relax Springer Science
  \& Business Media, 2013, vol. 348.

\bibitem{carlen1991entropy}
E.~A. Carlen and A.~Soffer, ``Entropy production by block variable summation
  and central limit theorems,'' \emph{Communications in mathematical physics},
  vol. 140, no.~2, pp. 339--371, 1991.

\bibitem{courtade2016strengthening}
T.~A. Courtade, ``Strengthening the entropy power inequality,'' \emph{arXiv
  preprint arXiv:1602.03033}, 2016.

\bibitem{barron1986entropy}
A.~R. Barron, ``Entropy and the central limit theorem,'' \emph{The Annals of
  probability}, pp. 336--342, 1986.

\bibitem{johnson2004fisher}
O.~Johnson and A.~Barron, ``Fisher information inequalities and the central
  limit theorem,'' \emph{Probability Theory and Related Fields}, vol. 129,
  no.~3, pp. 391--409, 2004.

\bibitem{artstein2004solution}
S.~Artstein, K.~Ball, F.~Barthe, and A.~Naor, ``Solution of shannon?s problem
  on the monotonicity of entropy,'' \emph{Journal of the American Mathematical
  Society}, vol.~17, no.~4, pp. 975--982, 2004.

\bibitem{bobkov2013rate}
S.~G. Bobkov, G.~P. Chistyakov, and F.~G{\"o}tze, ``Rate of convergence and
  edgeworth-type expansion in the entropic central limit theorem,'' \emph{The
  Annals of Probability}, vol.~41, no.~4, pp. 2479--2512, 2013.

\bibitem{bobkov2014berry}
------, ``Berry-esseen bounds in the entropic central limit theorem,''
  \emph{Probability Theory and Related Fields}, vol. 159, no. 3-4, p. 435,
  2014.

\bibitem{otto2000generalization}
F.~Otto and C.~Villani, ``Generalization of an inequality by {T}alagrand and
  links with the logarithmic {S}obolev inequality,'' \emph{Journal of
  Functional Analysis}, vol. 173, no.~2, pp. 361--400, 2000.

\bibitem{villani2003topics}
C.~Villani, \emph{Topics in optimal transportation}.\hskip 1em plus 0.5em minus
  0.4em\relax American Mathematical Soc., 2003, no.~58.

\bibitem{bobkov2014bounds}
S.~G. Bobkov, N.~Gozlan, C.~Roberto, and P.-M. Samson, ``Bounds on the deficit
  in the logarithmic {S}obolev inequality,'' \emph{Journal of Functional
  Analysis}, vol. 267, no.~11, pp. 4110--4138, 2014.

\bibitem{fathi2014quantitative}
M.~Fathi, E.~Indrei, and M.~Ledoux, ``Quantitative logarithmic {S}obolev
  inequalities and stability estimates,'' \emph{arXiv preprint
  arXiv:1410.6922}, 2014.

\bibitem{indrei2013quantitative}
E.~Indrei and D.~Marcon, ``A quantitative log-{S}obolev inequality for a two
  parameter family of functions,'' \emph{International Mathematics Research
  Notices}, p. rnt138, 2013.

\bibitem{ball2003entropy}
K.~Ball, F.~Barthe, and A.~Naor, ``Entropy jumps in the presence of a spectral
  gap,'' \emph{Duke Mathematical Journal}, vol. 119, no.~1, pp. 41--63, 2003.

\bibitem{ball2012entropy}
K.~Ball and V.~H. Nguyen, ``Entropy jumps for isotropic log-concave random
  vectors and spectral gap,'' \emph{Studia Mathematica}, vol. 213, no.~1, pp.
  81--96, 2012.

\bibitem{courtade2016jumps}
T.~A. Courtade, ``Entropy jumps for radially symmetric random vectors,''
  \emph{preprint}, 2016.

\bibitem{nelson1973free}
E.~Nelson, ``The free {M}arkoff field,'' \emph{Journal of Functional Analysis},
  vol.~12, no.~2, pp. 211--227, 1973.

\bibitem{costa1984similarity}
M.~Costa and T.~Cover, ``On the similarity of the entropy power inequality and
  the {B}runn-{M}inkowski inequality (corresp.),'' \emph{IEEE Transactions on
  Information Theory}, vol.~30, no.~6, pp. 837--839, 1984.

\end{thebibliography}

\end{document}